\newif\if@restonecol
\newtheorem{thm}{Theorem}
\newtheorem{cor}[thm]{Corollary}
\newtheorem{prop}[thm]{Proposition}
\newtheorem{defn}[thm]{Definition}
\newtheorem{exmp}[thm]{Example}
\begin{document}

\title{Feasibility of Motion Planning\\ on Acyclic and Strongly Connected Directed Graphs}
\author{Zhilin Wu and St\'{e}phane Grumbach \\
INRIA-LIAMA\thanks{CASIA -- PO Box 2728 --
Beijing 100080 -- PR China -- Stephane.Grumbach@inria.fr  zlwu@liama.ia.ac.cn}\\
Chinese Academy of Sciences}

\maketitle

\begin{abstract}
Motion planning is a fundamental problem of robotics with
applications in many areas of computer science and beyond. Its
restriction to graphs has been investigated in the literature for it
allows to concentrate on the combinatorial problem abstracting from
geometric considerations.  In this paper, we consider motion
planning over directed graphs, which are of interest for asymmetric
communication networks. Directed graphs generalize undirected
graphs, while introducing a new source of complexity to the motion
planning problem: moves are not reversible. We first consider the
class of acyclic directed graphs and show that the feasibility can
be solved in time linear in the product of the number of vertices
and the number of arcs. We then turn to strongly connected directed
graphs. We first prove a structural theorem for decomposing strongly
connected directed graphs into strongly biconnected components.
Based on the structural decomposition, we give an algorithm for the
feasibility of motion planning on strongly connected directed
graphs, and show that it can also be decided in time linear in the
product of the number of vertices and the number of arcs.
\end{abstract}

\section{Introduction}

Motion planning is a fundamental problem of robotics. It has been
extensively studied \cite{Lavalle06}, and has numerous practical
applications beyond robotics, such as in manufacturing, animation,
games \cite{MPG} as well as in computational biology
\cite{SA01,FK99}. The complexity of motion planning, which is
intrinsically PSPACE-hard \cite{Latombe95,Lavalle06}, has received a
lot of attention. The study of motion planning on graphs was
proposed by Papadimitriou et al. \cite{PapadimitriouRST94} to strip
away the geometric considerations of the general motion planning
problem and concentrate on the combinatorial problem.

In this paper, we consider the feasibility of motion planning over
directed graphs. Our results generalize results on undirected
graphs, which can be shown as a subclass of directed graphs.
Directed graphs are of great importance in several fields such as
communication networks which are frequently asymmetric
\cite{JetchevaJ06}. But technically, directed graphs differ from
undirected graphs, for movements in the graph are not reversible.

Papadimitriou et al. \cite{PapadimitriouRST94} first introduced the
problem of motion planning on graphs. They defined the Graph Motion
Planning with 1 Robot problem (GMP1R) as follows: Suppose we are
given a graph $G=(V,E)$ with $n$ vertices, there is one robot in a
vertex $s$ and some of the other vertices contain a movable
obstacle. The objective of GMP1R is to move the robot from the
source vertex $s$ to a destination vertex $t$ with the smallest
number of moves, where a move consists in moving a robot or an
obstacle from one vertex to an adjacent vertex that does not contain
an object (robot or obstacle). It may be impossible to move the
robot from $s$ to $t$, for instance, if all the vertices other than
$s$ are occupied by obstacles. The feasibility problem of GMP1R is
to decide whether it is possible or not to move the robot from the
source vertex to the destination vertex.

In \cite{PapadimitriouRST94}, it was shown that the feasibility of
GMP1R can be decided in polynomial time, and the optimization of
GMP1R is NP-complete (even on planar graphs). They also gave a
$O(n^6)$ exact algorithm as well as a fast $7$-approximation
algorithm for GMP1R on trees, a $O(\sqrt{n})$-approximation
algorithm for GMP1R on general graphs. Auletta et al. proposed more
efficient algorithms for the feasibility and optimization of GMP1R
on trees in \cite{AulettaMPP96,AulettaP01}.

Motion planning on graphs has wide practical applications. Track
transportation system \cite{Per88} constitutes a typical example:
Vehicles move on a system of tracks such that each track connects
two distinct stations. There is a distinguished vehicle which moves
from a source station to a destination station. There are other
vehicles (obstacles) on the non-source stations. The vehicles are
only able to stop at the stations and not able to stop in the middle
of tracks. They coordinate with each other to let the distinguished
vehicle move from the source station to the destination station.
Variant of the previous example is packet transfer in communication
buffers. Graphs are regarded as (bidirectional) communication
networks and objects as indivisible packets of data. If there is a
distinguished packet which moves from a source node to a destination
node, and there are already some other packets stored in the
communication buffers of nodes in the network, the objective is to
move the distinguished packet from the source node to the
destination node without exceeding the capacities of the
communication buffers of each node.



In practice, in the two previous examples, the tracks (links)
between the stations (nodes) might be asymmetric. This motivates the
study of motion planning on directed versus undirected graphs.

Let us consider first the track transportation system. Some of the
tracks may be unidirectional. For instance, if the two stations are
not in the same altitude, the track connecting them might be too
steep, and the vehicles not strong enough to climb up the track.
There may also be unidirectional tracks as a result of security
considerations. The vehicle movement from a source station to a
destination station, on a track-transportation system containing
unidirectional tracks, leads to motion planning on directed graphs.

Now consider the packet transfer in communication networks. There
might be unidirectional links in communication networks. For
instance, in wireless ad hoc networks, unidirectional links can
result from factors such as heterogeneity of receiver and
transmitter hardware, power control algorithms, or topology control
algorithms. Unidirectional links may also result from interferences
around a node that prevents it from receiving packets even though
the other nodes are able to receive packets from it
\cite{MD02,JetchevaJ06}. Networks with unidirectional links can be
modeled as directed graphs. The problem of transferring a
distinguished packet in networks with unidirectional links without
exceeding the capacities of the communication buffers amounts to
solving motion planning on directed graphs.

Directed graphs generalize undirected graphs, while introducing a
new source of complexity to the motion planning problem: moves are
not reversible, and motion planning might become infeasible after
inappropriate moves.

In this paper, we first give a motivating example to illustrate that
motion planning on directed graphs is much more intricate than
motion planning on graphs. Then, we consider the class of acyclic
directed graphs, we give an algorithm to decide the feasibility on
such class of directed graphs, prove its correctness, and analyze
its complexity. We show that the feasibility of motion planning on
acyclic directed graphs can be decided in time linear in the product
of the number of vertices and the number of arcs
(Theorem~\ref{thm:acyclic-complexity}). We then turn to strongly
connected directed graphs. We first consider their structure and
introduce a new class of directed graphs, strongly biconnected
directed graphs. We obtain an interesting characterization of
strongly biconnected directed graphs by showing that a directed
graph is strongly biconnected iff it has an open ear decomposition
(Theorem~\ref{thm:bi-strong}). This characterization can be seen as
a generalization of the classical open-ear-decomposition
characterization of biconnected graphs. We then prove a structural
theorem for decomposing strongly connected directed graphs into
strongly biconnected components (Theorem~\ref{thm:strong2}). Based
on the open-ear-decomposition characterization, we show that motion
planning on strongly biconnected directed graphs is feasible iff
there is at least one vertex occupied neither by robot nor by
obstacle (Theorem~\ref{thm:feas-bi-strong}). Based on the structural
decomposition, we give an algorithm for the feasibility of motion
planning on strongly connected directed graphs, prove its
correctness, and analyze its complexity. We show that the
feasibility of motion planning on strongly connected directed graphs
can also be decided in time linear in the product of the number of
vertices and the number of arcs
(Theorem~\ref{thm:strong-complexity}).

The paper is organized as follows. A motivating example is presented
in the next section. In Section~\ref{sec:prelim}, we recall
classical definitions from graph theory. We consider acyclic
directed graphs in Section~\ref{sec:acyclic}, and give an algorithm
to decide the feasibility of motion planning on such class of
directed graphs. In Section~\ref{sec:strong-structure}, we consider
strongly biconnected directed graphs, and prove a structural theorem
on their decomposition into strongly biconnected components. The
feasibility for strongly connected directed graphs is considered in
Section~\ref{sec:strong-algo}.

\section{A motivating example}
In the sequel, for brevity, we use ``digraph'' to denote ``directed
graph''.

Let us consider a simple example to illustrate the motion planning
on digraphs. Vertices can contain either an object (obstacle or the
robot) or nothing. If there is no object on a vertex, we say that
there is a \emph{hole} in that vertex. For an arc $(v,w)$ from $v$
to $w$, with an object on $v$, and a hole on $w$, the object can be
moved from $v$ to $w$, and we say equivalently that the hole can be
moved (backwards) from $w$ to $v$.

\begin{figure}[ht]
\centering
\includegraphics[width=0.7\textwidth]{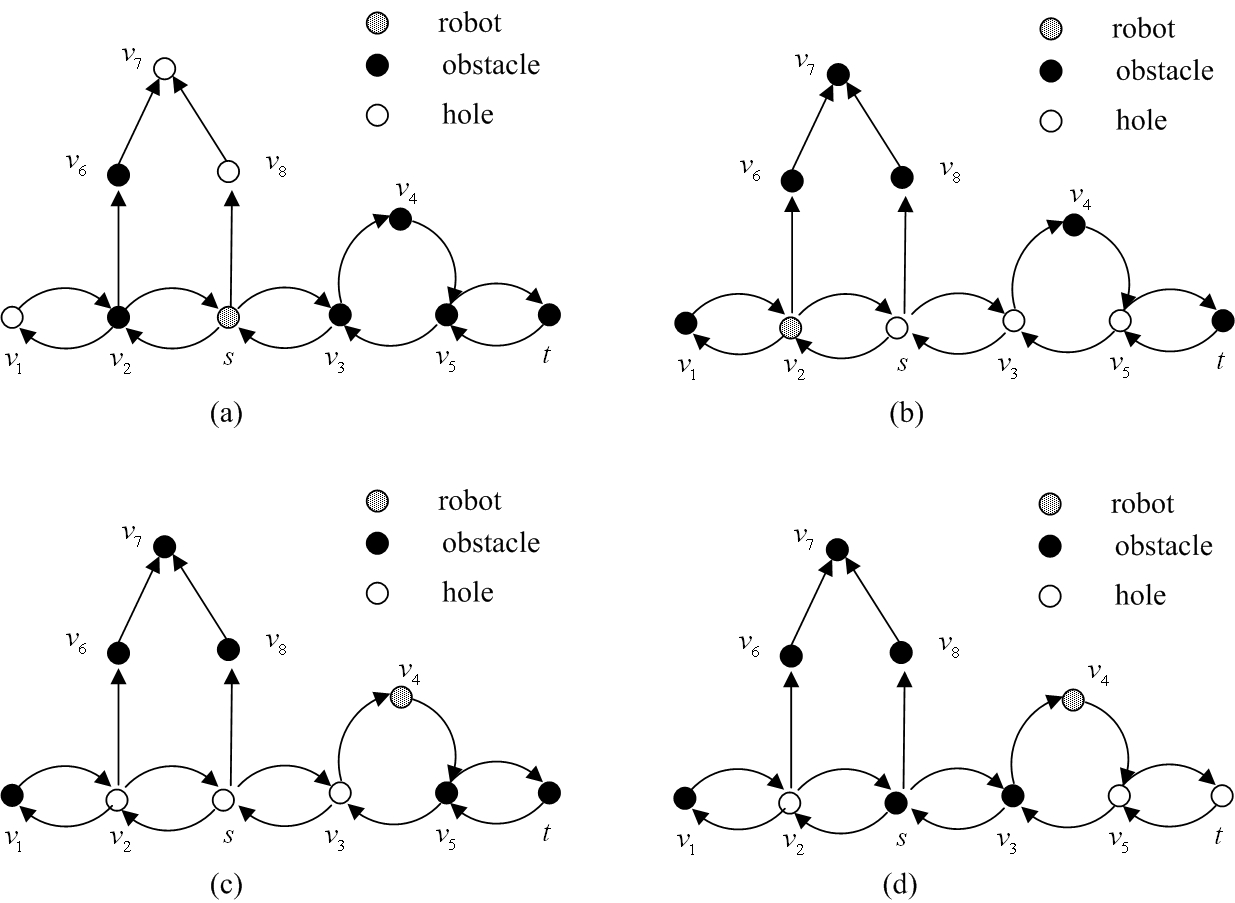}
\caption{Motion planning on digraphs}\label{fig:example-1}
\end{figure}

Consider the strongly connected component  $C$ in the graph of
Figure~\ref{fig:example-1} which contains vertices $v_1, ..., v_5$,
$s$ and $t$. The initial positions of the robot and obstacles are
shown in Figure~\ref{fig:example-1}(a).

We can move the robot from $s$ to $t$ as follows: move the hole in
$v_1$ to $v_2$, move the robot from $s$ to $v_2$, then move the two
holes in $v_{7}$, $v_{8}$ into $C$ through $s$, without moving the
robot in $v_2$ (Figure~\ref{fig:example-1}(b)). Now move the
obstacle in $v_4$ to $v_5$, and move the robot to $v_4$ (see
Figure~\ref{fig:example-1}(c)). Move the two obstacles in $v_5$ and
$t$ to $v_3$ and $s$ (Figure~\ref{fig:example-1}(d)), then move the
robot from $v_4$ to $v_5$, and finally to $t$. The main idea of
these moves is to move the robot to $v_4$ in order to free the way
for the moves of the holes from $s$ and $v_3$  to $v_5$ and $t$.

If the robot is in $s$ and we move the hole in $v_7$ to $v_2$
(Figure~\ref{fig:example-2}(a)), then the problem becomes
infeasible. We can move the robot from $s$ to $v_2$ and the hole in
$v_8$ to $s$ (Figure~\ref{fig:example-2}(b)), but it is then
impossible to move the robot from $v_2$ to $v_4$.

\begin{figure}[ht]
\centering
\includegraphics[width=0.7\textwidth]{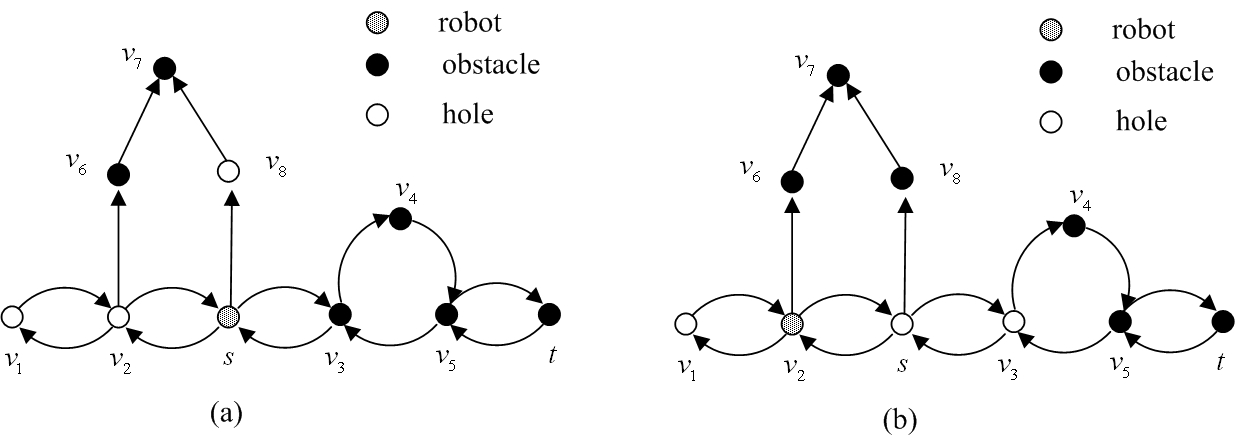}
\caption{Object moves that do not preserve feasibility}\label{fig:example-2}
\end{figure}

As illustrated in the above example, the intricacy of motion
planning on digraphs follows from the non-reversibility of moves in
the digraphs.

In the sequel, we propose algorithms which take as input, a digraph
$D=(V,E)$ encoded by its adjacency lists, a source and destination
vertex $s,t \in V$, a function $f$ mapping each vertex to an element
of the set \{``robot'',``obstacle'', ``hole''\}, and produces a
Boolean value (true or false) indicating whether it is feasible to
move the robot from $s$ to $t$ in $D$.

\section{Preliminaries}\label{sec:prelim}

A \emph{digraph} $D$ is a binary tuple $(V, E)$ such that $E
\subseteq V^2$. Elements of $V$ and $E$ are called respectively
\emph{vertices} and \emph{arcs} of $D$. We assume that $(v,v) \not
\in E$ for all $v \in V$ (there are no self-loops).

For a vertex $v$ of a digraph $D=(V,E)$, the \emph{indegree} of $v$,
denoted $in(v)$, is defined as $\lvert\{w \in V | (w,v) \in
E\}\rvert$, and the \emph{outdegree} of $v$, denoted $out(v)$, is
defined as $|\{w \in V | (v,w) \in E\}|$.

A \emph{graph} $G$ is a binary tuple $(V, E)$ such that $E \subseteq
V^{[2]}$, where $V^{[2]}$ contains exactly all two-element subsets
of $V$, namely $V^{[2]}=\{\{v,w\}| v, w \in V, v \neq w\}$. Elements
of $E$ are called \emph{edges} of $G$.

For a vertex $v$ of a graph $G=(V,E)$, the \emph{degree} of $v$,
denoted $deg(v)$, is defined as $|\{w \in V | \{v,w\} \in E\}|$.

If $D=(V,E)$ (resp. $G=(V,E)$), and $e=(v,w) \in E$ (resp.
$e=\{v,w\} \in E$), then $e$ is said to be \emph{incident} to $v$
and $w$ in $D$ (resp. $G$).

A digraph (resp. graph) containing exactly one vertex is said to be
\emph{trivial}, otherwise it is said to be \emph{nontrivial}.
%
%

Given a digraph $D=(V,E)$ (resp. graph $G=(V,E)$), the digraph
(resp. graph) $H=(V_H,E_H)$ such that $V_H \subseteq V$ and $E_H
\subseteq E$ is called a \emph{sub-digraph} of $D$ (resp.
\emph{subgraph} of $G$). Let $X \subseteq V$, the sub-digraph (resp.
subgraph) \emph{induced by} $X$, denoted $D[X]$ (resp. $G[X]$), is
the sub-digraph (resp. subgraph) $(X,E \cap X \times X)$ (resp. $(X,
E \cap X^{[2]})$).


Suppose $D=(V,E)$ (resp. $G=(V,E)$) is a digraph (resp. graph) and
$X \subseteq V$, let $D-X$ (resp. $G-X$) denote the digraph (resp.
graph) obtained from $D$ (resp. $G$) by deleting all the vertices in
$X$ and all the arcs (resp. edges) incident to at least one element
of $X$. If $X=\{v\}$, then $D-\{v\}$ (resp. $G-\{v\}$) is written as
$D-v$ (resp. $G-v$) for simplicity.

Given a digraph $D=(V,E)$, the \emph{underlying graph} of $D$,
denoted by $\mathcal{G}(D)$, is the graph obtained from $D$ by
omitting the directions of arcs, namely $\mathcal{G}(D)=\left(V,
\{\{v,w\} | (v,w) \in E\}\right)$.

A \emph{path} of a digraph $D=(V,E)$ (resp. graph $G=(V,E)$) is an
alternating sequence of vertices and arcs (resp. edges)
$v_0e_1v_1...v_{k-1}e_kv_k$ ($k \ge 1$) such that for all $1 \le i
\le k$, $e_i=(v_{i-1}, v_i) \in E$ (resp. $e_i=\{v_{i-1},v_i\} \in
E$), and for all $0 \le i < j \le k$, $v_i \ne v_j$. $v_0$ and $v_k$
are called the \emph{tail} and \emph{head endpoint} of the path
respectively, and the other vertices are called the \emph{internal
vertices} of the path. In particular, an arc or an edge is a path
without internal vertices.

A \emph{cycle} of a digraph $D=(V,E)$ is a sequence of vertices
$v_0v_1...v_k$ such that for all $0 \le i \le k$, $(v_i, v_{i+1})
\in E$ ($v_{k+1}$ interpreted as $v_0$), and for all $0 \le i < j
\le k$, $v_i \ne v_j$. Cycles of graphs can be defined similarly,
but we have the additional restriction that $k \ge 2$. So cycles of
graphs contain at least three vertices.

A digraph $D$ is \emph{acyclic} if there are no cycles in $D$.

Suppose $H=(V_H,E_H)$ is a sub-digraph of $D=(V, E)$ (resp. subgraph
of $G=(V,E)$). A path $P$ of $D$ (resp. $G$) is an \emph{$H$-path}
if the two endpoints of $P$ are in $H$, no internal vertices of $P$
are in $H$, and no arcs (resp. edges) of $P$ are in $H$. In
particular, an arc $(v,w) \in E \backslash E_H$ (resp. an edge
$\{v,w\} \in E \backslash E_H$) with $v,w \in V_H$ is an $H$-path. A
cycle $C$ is an \emph{$H$-cycle} if there is exactly one vertex of
$C$ in $H$.

Let $H_1=(V_1,E_1)$ and $H_2=(V_2,E_2)$ be two sub-digraphs of a
digraph $D=(V,E)$, then the \emph{union} of $H_1$ and $H_2$, $H_1
\cup H_2$, is defined as $\left(V_1 \cup V_2, E_1 \cup E_2\right)$.
The union of subgraphs can be defined similarly.

A digraph $D=(V,E)$ is \emph{strongly connected} if for any two
distinct vertices $v$ and $w$, there are both a path from $v$ to $w$
and a path from $w$ to $v$ in $D$. The digraph containing exactly
one vertex and no arcs is the minimal strongly connected digraph.

Let $D=(V,E)$ be a digraph. The \emph{strongly connected components}
of $D$ are the maximal strongly connected sub-digraphs of $D$.

A graph $G=(V,E)$ is \emph{connected} if for any two distinct
vertices $v$ and $w$ of $G$, there is a path of $G$ with endpoint
$v$ and $w$. The \emph{connected components} of a graph $G$ are the
maximal connected subgraphs of $G$.


If $G=(V,E)$ is a graph, $v \in V$, and the number of connected
components of $G-v$ is more than that of $G$, then $v$ is said to be
a \emph{cut vertex} of $G$.

A graph $G$ is \emph{biconnected} if $G$ is connected and there are
no cut vertices in $G$. In particular, the graph containing exactly
one vertex is the minimal biconnected graph. The \emph{biconnected
components} of a graph $G$ are the maximal biconnected subgraphs of
$G$.

\medskip \noindent Without loss of generality, we assume
that for each digraph $D$, (i) the underlying graph of $D$,
$\mathcal{G}(D)$, is connected, (ii) the source vertex $s$ and the
destination vertex $t$ are distinct (thus all the digraphs
considered from now on are nontrivial), (iii) there is at least one
path from $s$ to $t$ in $D$.


\section{Motion planning on acyclic digraphs}\label{sec:acyclic}
In this section we assume that $D=(V,E)$ is an acyclic digraph.

We first recall a result about acyclic orderings of acyclic
digraphs.

An \textit{acyclic ordering} of an acyclic digraph $D=(V,E)$ is an
ordering of all vertices of $D$, say $v_1,\dots,v_k$, such that
$(v_i,v_j) \in E$ implies $i<j$. From \cite{BangGutin00}, we know
that an acyclic ordering of a given acyclic digraph can be computed
in linear time by depth-first-search.

\begin{thm}[\cite{BangGutin00}]\label{thm:acyclic-ordering}
Given an acyclic digraph $D=(V,E)$, an acyclic ordering of $D$ can
be computed in time $O(n+m)$, where $n$ is the number of vertices
and $m$ is the number of arcs of $D$.
\end{thm}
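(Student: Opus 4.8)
The plan is to recognize that an acyclic ordering is precisely a topological sort of $D$, and to produce one by a single depth-first search (DFS). Concretely, I would run DFS over $D$, maintaining for each vertex a \emph{finish time} recording when the recursive exploration of that vertex's out-neighbours completes. I would then output the vertices in order of \emph{decreasing} finish time; equivalently, each time a vertex finishes I prepend it to the front of a list. Since $\mathcal{G}(D)$ is connected, the search may start from any vertex and restart at unvisited vertices as needed, so the whole vertex set is covered.

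The correctness argument rests on one key lemma: in an acyclic digraph, DFS produces no \emph{back arcs}, i.e. no arc $(v,w)$ such that $w$ is an ancestor of $v$ in the DFS forest (a back arc, together with the tree path from $w$ down to $v$, would close a cycle, contradicting acyclicity). Given this, I would show that for \emph{every} arc $(v,w) \in E$ the finish time of $w$ is strictly smaller than that of $v$. This is verified by a case analysis on the arc type using the parenthesis structure of DFS intervals: for a tree or forward arc, $w$ is a descendant of $v$ and finishes first; a cross arc $(v,w)$ points to an already-finished vertex, so again $w$ finishes before $v$; and back arcs are excluded by the lemma. Consequently, ordering by decreasing finish time places $v$ before $w$ whenever $(v,w) \in E$, which is exactly the defining property of an acyclic ordering.

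For the running time, the DFS itself visits each vertex once and traverses each arc once, costing $O(n+m)$, and recording finish times adds only $O(1)$ per vertex. Emitting the final list in decreasing order of finish time is $O(n)$ when the list is built incrementally by prepending, so the total is $O(n+m)$.

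I expect the main obstacle to be the correctness of the finish-time ordering rather than the complexity bound: one must argue carefully, via the white-path and parenthesis properties of DFS, that acyclicity forbids back arcs and hence that no arc ever points from a lower to a higher finish time. An alternative that sidesteps DFS internals is Kahn's algorithm—repeatedly remove a vertex of indegree $0$, append it to the ordering, and decrement the indegrees of its out-neighbours—whose correctness follows from the fact that a nonempty acyclic digraph always has a source; it likewise runs in $O(n+m)$ using a queue of indegree-$0$ vertices. Either route yields the stated bound.
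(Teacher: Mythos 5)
Your proof is correct: the paper does not prove this theorem at all—it imports it from \cite{BangGutin00}, noting only that the ordering is "computed in linear time by depth-first-search"—and your DFS finish-time argument (no back arcs in an acyclic digraph, hence every arc goes from larger to smaller finish time) is precisely the standard proof behind that citation. The complexity accounting and the Kahn's-algorithm alternative are both sound as well.
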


We introduce some notations in the following.

Let $V^\prime$ denote the set of vertices from which there is a path
to $t$, and to which there is a path from $s$. In particular, $s,t
\in V^\prime$.

For each $v \in V^\prime$, let $h(v)$ denote the number of holes
that can be moved to $v$.

For each $v \in V^\prime$, define $h_t(v)$ as follows: Suppose that
the robot is in $v$.
\begin{itemize}
\item If the robot can be moved from $v$ to $t$ in $D$, then there
may be different paths (from $v$ to $t$) along which the robot can
be moved from $v$ to $t$, let $h_t(v)$ be the minimal length (number
of arcs) of such paths.

\item If it is impossible to move the robot from $v$ to $t$, let
$h_t(v)= \infty$.
\end{itemize}

\medskip

The algorithm FAD($D,s,t,f$) (see Algorithm 1 in the box below)
decides the feasibility of the motion planning problem on acyclic
digraphs. FAD first computes $h(v)$ for each $v \in V^\prime$, then
computes $h_t(v)$ for each $v \in V^\prime$, finally checks whether
$h_t(s)<\infty$.

%
%
%
%
%
\begin{algorithm}[ht]

\dontprintsemicolon \SetVline
%
\SetKwInOut{Input}{input}
\SetKwInOut{Output}{output}
\Input{$(D,s,t,f)$ such that $D=(V,E)$ is an acyclic digraph, $s,t \in V$, $s \ne t$, and $f$ is a function from $V$ to \{``robot'',``obstacle'',``hole''\}.}
\Output{true or false.}
\tcc{Compute $V^\prime$, the set of vertices reachable from $s$ and
from which $t$ is reachable.}
Let $W$ be a FIFO queue, push $s$ into $W$.\;
\lForEach{$v \in V$} {\lIf{$v =s$}{$srcReach(v):=true$.
}\lElse{$srcReach(v):=false$.}}\;
\While{$W$ is nonempty}
{
  $w:=$the first element of $W$, pop the first element of $W$.\;
  \ForEach{$w^\prime$ such that $(w,w^\prime) \in E$}
  {
    \If{$srcReach(w^\prime)=false$}
    {$srcReach(w^\prime):=true$, push $w^\prime$ into $W$.}
  }
}
%
%
Push $t$ into $W$.\;
\lForEach{$v \in V$}{\lIf{$v = t$}{$reachDest(v):=true$
}\lElse{$reachDest(v):=false$}}\;
\While{$W$ is nonempty}
{
  $w:=$the first element of $W$, pop the first element of $W$.\;
  \ForEach{$w^\prime$ such that $(w^\prime,w) \in E$}
  {
    \If{$reachDest(w^\prime)=false$}
    {$reachDest(w^\prime):=true$, push $w^\prime$ into $W$.}
  }
}
$V^\prime:=\{v \in V | srcReach(v)=reachDest(v)=true\}$.\;
\tcc{Compute $h(v)$}
\ForEach{$v \in V^\prime$}
{
  \lForEach{$v^\prime \in V$}{$idx(v^\prime):=false$}\;
%
%
  \lIf{$f(v)=$``hole''}{$h(v):=1$. }\lElse{$h(v):=0$.}\;
  Push $v$ into $W$.\;
  \While{$W$ is nonempty}
  {
    $w:=$ the first element of $W$, pop the first element of $W$.\;
    \ForEach{$w^\prime \in V$ such that $(w,w^\prime) \in E$}
    {
      \If{$idx(w^\prime)=false$}
      {
        $idx(w^\prime):=true$, push $w^\prime$ into $W$.\;
        \lIf{$f(w^\prime)=$``hole''}{$h(v):=h(v)+1$.}
      }
    }
  }
}
\tcc{Compute $h_t(v)$}
Compute an acyclic ordering of $D[V^\prime]$, say $v_1,\dots,v_k$,
such that $v_1=s,v_k=t$.\;
$h_t(v_k):=0$.\;
\For{$i$ from $k-1$ to $1$}
{
    \If{$\exists j: i < j \le k$ such that
    $(v_i,v_j) \in E$ and $h(v_j) \ge
    h_t(v_j)+1$}
      {
        $h_t(v_i):=\min\{h_t(v_j)+1 |
        (v_i,v_j) \in E, h(v_j) \ge
        h_t(v_j)+1 \}$.\;
      }
}
\lIf{$h_t(s) < \infty$}{\Return true. }\lElse{\Return false.}\;
\caption{FAD($D,s,t,f$)\label{Alg:feas-acyclic}}
\end{algorithm}

\clearpage

The computation of $h(v)$ and $h_t(v)$ on an acyclic digraph is
illustrated in Figure~\ref{fig:acyclic-example}.

\begin{figure}[ht]
\centering
  \includegraphics[width=0.7\textwidth]{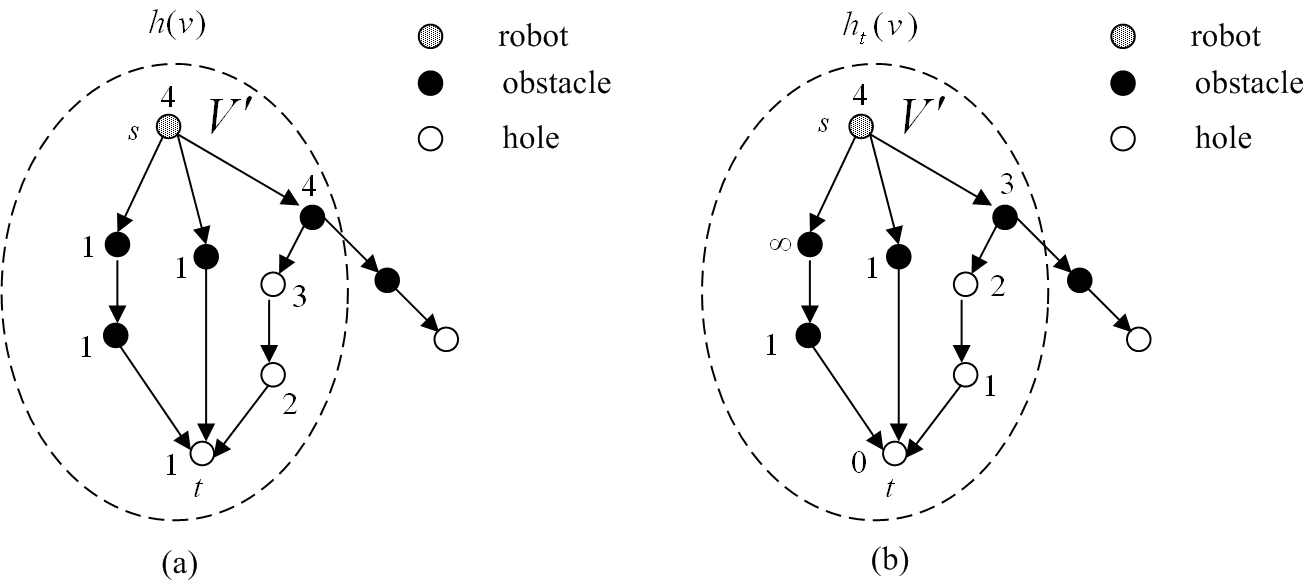}
  \caption{Computation of $h(v)$ and $h_t(v)$}\label{fig:acyclic-example}
\end{figure}


\begin{thm}
FAD is correct.
\end{thm}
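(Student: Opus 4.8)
The plan is to reduce the correctness of FAD to a single statement: for every $v \in V'$ the robot can be moved from $v$ to $t$ in the given configuration if and only if $h_t(v) < \infty$; applied to $v=s$ this matches the returned Boolean. Before touching feasibility I would record the combinatorial meaning of the quantities FAD computes: the two searches compute $V'$ as the set of vertices lying on some $s$-$t$ path, and the inner search shows $h(v)$ equals the number of holes lying in $R(v)$, the set of vertices reachable from $v$ by a directed path (including $v$). The structural fact that drives everything is that $R(v)$ is \emph{closed under outgoing arcs}: if $u\in R(v)$ and $(u,w)\in E$ then $w\in R(v)$. Hence every arc crossing the boundary of $R(v)$ points \emph{into} $R(v)$, so a legal move can only push an object into $R(v)$ (never a hole in, nor an object out); consequently the number of holes currently residing in $R(v)$ is \emph{non-increasing} along any sequence of moves. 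Since a hole can be brought to $v$ exactly when it sits in $R(v)$ (holes travel backwards along arcs), and the robot itself counts as an object for this tally, this monotonicity is the main lever.

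Next I would show, by induction along the reverse acyclic ordering, that the $\min$-recursion defining $h_t$ is equivalent to a path condition: $h_t(v)<\infty$ iff there is a directed path $v=w_0,w_1,\dots,w_L=t$ with $h(w_r)\ge L-r+1$ for every $1\le r\le L$, and the least such $L$ equals $h_t(v)$. This is a routine unfolding (one direction reads off the realizing arc, the other splices the arc $(v,w_1)$ in front of a suffix path), and it recasts feasibility as having at each $w_r$ one hole to enter it plus $L-r$ holes to finish the trip, all drawn from $R(w_r)$.

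For sufficiency (such a path implies the robot reaches $t$) I would induct on $L$, moving the robot one step at a time: deliver a hole to $w_1$, then move the robot $w_0\to w_1$, and argue the suffix path $w_1,\dots,w_L$ still satisfies its conditions. The robot step changes no count $h(w_r)$ for $r\ge 2$ (as $w_0,w_1\notin R(w_r)$), so the only danger is that delivering the hole depletes a downstream set. I would avoid this by choosing \emph{which} hole to deliver: letting $r^*$ be the smallest index $\ge 2$ whose constraint is tight, one has $h(w_1)\ge L> h(w_{r^*})=L-r^*+1$, so $R(w_1)\setminus R(w_{r^*})$ contains a hole; delivering it along a path that, by closure of $R(w_{r^*})$ under outgoing arcs, never enters $R(w_{r^*})$, leaves every tight downstream count untouched and lowers only non-tight ones by at most one, preserving all suffix conditions.

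The \textbf{main obstacle is the necessity direction}, because a successful run changes the configuration while the $h$-values are read off the initial one, and the robot's own first step need not satisfy the guard $h(w_1)\ge h_t(w_1)+1$. The fix is to induct (again along the reverse ordering) on a \emph{strengthened} statement about the robot standing on a vertex $w$: if the robot at $w$ can reach $t$ in configuration $C$, then there is a path $w=u_0,\dots,u_M=t$ with $h^C(u_m)\ge M-m+1$ for $1\le m\le M$ \emph{and} $h^C(w)\ge M$, where $h^C$ counts the holes of $C$. Taking the first move $w\to u_1$ to a configuration $C'$, the hypothesis gives the tail conditions for $u_2,\dots,u_M$ together with $h^{C'}(u_1)\ge M-1$; monotonicity of hole counts yields $h^C(u_m)\ge h^{C'}(u_m)$ for $m\ge 2$, the robot crossing into $R(u_1)$ consumes exactly one hole of $R(u_1)$ so $h^C(u_1)\ge h^{C'}(u_1)+1\ge M$, and $R(u_1)\subseteq R(w)$ gives $h^C(w)\ge h^C(u_1)\ge M$, closing the induction. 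Applying this at $w=s$ and combining with the path/DP equivalence and with sufficiency yields feasibility $\Leftrightarrow h_t(s)<\infty$, which is what FAD returns. The remaining checks (that the searches compute $V'$ and $h$, that holes outside $V'$ are correctly counted in $h$ while the path stays inside $V'$) are routine.
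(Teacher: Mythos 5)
Your proposal is correct, and at the top level it follows the same skeleton as the paper's proof: both directions are reduced to the equivalence between $h_t(s)<\infty$ and the existence of a path $s=w_0,\dots,w_L=t$ with $h(w_r)\ge L-r+1$ for $1\le r\le L$. The difference lies below that skeleton. The paper's proof gives the trace argument (by acyclicity the robot's trajectory is a simple path, and the holes fed into successive path vertices are pairwise distinct) and then twice appeals to ``by induction, we can show'' without isolating any invariant that would make those inductions close. You isolate exactly that invariant: the closure of the reachability set $R(v)$ under outgoing arcs, hence the monotone non-increase of the number of holes in $R(v)$ under arbitrary legal moves. This lemma is what powers your strengthened necessity induction (the paper's distinct-holes observation is a special case of it), and your tight-index hole selection in the sufficiency direction repairs a genuine subtlety the paper silently skips: delivering an arbitrary hole of $R(w_1)$ to $w_1$ can destroy a downstream constraint (for instance when the only hole in $R(t)$ is the one chosen to be brought to $w_1$), so the induction only closes if the delivered hole is taken from $R(w_1)\setminus R(w_{r^*})$ and routed outside $R(w_{r^*})$, which your closure lemma guarantees is possible. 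In short, your route buys rigor exactly where the paper is thinnest, at the cost of length; the paper's version buys brevity and the correct intuition that acyclicity makes holes ``single-use''. Two minor points to tidy in your write-up: treat the case where no index $r\ge 2$ is tight (then any hole of $R(w_1)$ may be delivered), and state explicitly that every $s$--$t$ path lies inside $V'$, so the dynamic program over $D[V']$ and the $h$-values computed over all of $D$ fit together as claimed; both are routine.
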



\begin{proof}

We prove that given an instance of the motion planning problem on
acyclic digraphs, FAD returns true iff the problem is feasible.

\noindent ``If'' part:

If the problem is feasible, the robot can be moved from $s$ to $t$.
Let $P$ be the trace of the robot during this movement (namely the
sequence of nodes and arcs reached by the robot). As a result of
acyclicity of $D$, $P$ is a path of $D$. Let $P=v_0e_1v_1\cdots
v_{k-1}e_k v_k$ such that $v_0=s$ and $v_k=t$. During the movement,
when the robot is moved to $v_{i-1}$($1 \le i \le k$), in order to
move the robot from $v_{i-1}$ to $v_{i}$, a hole should be moved to
$v_{i}$. Since $D$ is acyclic, the hole moved to $v_{i}$ cannot be
moved to $v_j$ for any $j$ such that $i < j \le k$ (holes are moved
along the reverse direction of arcs). So these holes are distinct
from each other, and can be moved to occupy all the vertices on $P$
except $s$. By induction, we can show that for all vertices $v$ on
$P$, $h_t(v)$ computed by FAD satisfies that $h_t(v) < \infty$.
Consequently FAD returns true.

\noindent ``Only if'' part:

If FAD returns true, then $h_t(s) < \infty$. By induction, we can
show that there is a path $P$ from $s$ to $t$ such that for each
vertex $v \ne s$ on $P$, we have $h_t(v)< \infty$ and $h(v) \ge
h_t(v)+1$, and for each arc $(v,w)$ on $P$, $h_t(v)=h_t(w)+1$. By
induction again, we can show that the holes in $D$ can be moved to
occupy all the vertices on $P$ except $s$. Then the robot can be
moved to $t$ along $P$, the problem is feasible.
\end{proof}

\begin{thm}\label{thm:acyclic-complexity}
The time complexity of FAD is $O(nm)$, where $n$ is the number of
vertices, and $m$ is the number of arcs.
\end{thm}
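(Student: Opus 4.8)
The plan is to bound the running time of FAD by analyzing its three phases separately and summing their costs. Throughout, let $n = |V|$ and $m = |E|$, and recall that the adjacency-list encoding lets us enumerate the out-neighbors (or in-neighbors, via a transposed list) of all vertices together in $O(n+m)$ time.

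First I would handle the computation of $V'$. This consists of two breadth-first searches (one forward from $s$ using \emph{srcReach}, one backward toward $t$ using \emph{reachDest}), each with an $O(n)$ initialization loop. A standard BFS over an adjacency-list representation visits each vertex once and scans each arc at most once, so each search runs in $O(n+m)$, and the final set-builder assignment for $V'$ is a single $O(n)$ pass. Hence this phase costs $O(n+m)$, which is dominated by $O(nm)$.

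Next I would analyze the computation of $h(v)$, which I expect to be the dominant term and the crux of the bound. The outer \texttt{foreach} loop runs once for each $v \in V'$, so at most $n$ times. For a fixed $v$, the body first resets $idx$ in an $O(n)$ loop, then performs a single graph traversal from $v$ that, by the \emph{idx} flag, visits each vertex at most once and examines each arc at most once; the per-vertex bookkeeping (checking $f(w')$ and possibly incrementing $h(v)$) is $O(1)$ per arc. Thus one iteration of the outer loop costs $O(n+m)$, and summing over the at most $n$ vertices of $V'$ gives $O(n(n+m)) = O(n^2 + nm) = O(nm)$, using the standing assumption that $\mathcal{G}(D)$ is connected so that $m \ge n-1$ and the $n^2$ term is absorbed. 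This is the step I would flag as the main obstacle, since the bound hinges on the observation that each inner traversal is itself linear, so that the overall cost is the product $n \cdot (n+m)$ rather than anything worse.

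Finally I would bound the computation of $h_t(v)$. By Theorem~\ref{thm:acyclic-ordering}, the acyclic ordering of $D[V']$ is obtained in time linear in its size, hence $O(n+m)$. The subsequent \texttt{for} loop processes each vertex $v_i$ once; for each it inspects the out-arcs $(v_i, v_j)$ to find those satisfying $h(v_j) \ge h_t(v_j)+1$ and takes a minimum. Since each arc of $D[V']$ is inspected exactly once across all iterations and each inspection is $O(1)$, this loop runs in $O(n+m)$, and the concluding comparison $h_t(s) < \infty$ is $O(1)$. Summing the three phases yields $O(n+m) + O(nm) + O(n+m) = O(nm)$, which establishes the claim.
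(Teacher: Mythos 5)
Your proof is correct and follows essentially the same route as the paper: a phase-by-phase decomposition in which the $h(v)$ computations dominate at $O(n(n+m)) = O(nm)$, with the remaining phases linear and absorbed, using connectivity of $\mathcal{G}(D)$ to relate $n$ and $m$. Your treatment is in fact slightly more careful than the paper's, since you explicitly account for the $O(n)$ reset of $idx$ inside each $h(v)$ computation and justify absorbing the $n^2$ term via $m \ge n-1$, where the paper simply asserts each $h(v)$ computation is $O(m)$ and that $n \le m$.
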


\begin{proof}

Let $D$ be an acyclic digraph, $n$ and $m$ be the number of vertices
and number of arcs of $D$ respectively.

The computation of $V^\prime$ takes $O(m)$ time since each arc is
visited at most once in each of the first two ``While'' loops.

The computation of $h(v)$'s takes $O(nm)$ time because the
computation of each $h(v)$ takes $O(m)$ time and there are at most
$O(n)$ such computations.

The computation of an acyclic ordering of $D[V^\prime]$ takes
$O(n+m)$ time from Theorem~\ref{thm:acyclic-ordering}.

The computation of $h_t(v)$'s takes $O\left(\sum \limits_{v \in
V^\prime}out(v)\right)=O(m)$ time.

Since $n \le m$, we conclude that the time complexity of FAD is
$O(m+nm+n+m+m)=O(nm)$.
\end{proof}

\section{Structure of strongly connected digraphs}\label{sec:strong-structure}

In this section, we consider the structure of strongly connected
digraphs. We first recall some definitions and theorems.

An \emph{open ear decomposition} of a digraph $D=(V,E)$ (resp. graph
$G=(V,E)$) is a sequence of sub-digraphs of $D$ (resp. subgraphs of
$G$), say $P_0, ..., P_r$, such that
\begin{itemize}

\item $P_0$ is a cycle;

\item $P_{i+1}$ is a $D_i$-path (resp. $G_i$-path), where $D_i$ (resp. $G_i$) is $\bigcup \limits_{0 \le j \le i} P_j$ for all $0 \le i <
r$;

\item $D = \bigcup \limits_{0 \le i \le r}P_i$ (resp. $G = \bigcup \limits_{0 \le i \le r}P_i$).
\end{itemize}

A \emph{closed ear decomposition} of a digraph $D=(V,E)$ (resp.
graph $G=(V,E)$) is a sequence of sub-digraphs of $D$ (resp.
subgraphs of $G$), say $P_0, ..., P_r$, such that
\begin{itemize}

\item $P_0$ is a cycle;

\item $P_{i+1}$ is a $D_i$-path or a $D_i$-cycle (resp. a $G_i$-path or a $G_i$-cycle), where $D_i$ (resp. $G_i$) is $\bigcup \limits_{0 \le j \le i} P_j$ for all $0 \le i <
r$;

\item $D = \bigcup \limits_{0 \le i \le r}P_i$ (resp. $G = \bigcup \limits_{0 \le i \le r}P_i$).
\end{itemize}

\begin{thm}[\cite{West00}]\label{thm:bi-connected}
Let $G$ be a graph containing at least three vertices. $G$ is
biconnected iff $G$ has an open ear decomposition. Moreover, any
cycle can be the starting point of an open ear decomposition.
\end{thm}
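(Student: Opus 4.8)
The plan is to prove the two directions of the equivalence separately, and to obtain the ``moreover'' clause for free from the construction used in the harder direction.

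For the easy direction, suppose $G$ admits an open ear decomposition $P_0,\dots,P_r$, and set $G_i=\bigcup_{0\le j\le i}P_j$ as in the definition. I would show by induction on $i$ that each $G_i$ is biconnected. The base case $G_0=P_0$ is a cycle, hence connected and free of cut vertices. The inductive step rests on a single lemma: if $H$ is biconnected and $P$ is a path whose two endpoints are \emph{distinct} vertices of $H$ while its internal vertices and all its edges are new, then $H\cup P$ is biconnected. To verify this I would check that $H\cup P$ has no cut vertex by deleting an arbitrary vertex $w$: if $w$ is internal to $P$, then $H$ is left intact and the two remaining pieces of $P$ stay attached to it; if $w\in V(H)$, then $H-w$ is connected (as $H$ is biconnected) and the ear $P$ still hangs off $H-w$ through at least one endpoint, since its two endpoints are distinct and at most one of them can equal $w$. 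Each $P_{i+1}$ is exactly such an ear, so applying the lemma $r$ times yields that $G=G_r$ is biconnected.

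For the converse, let $G$ be biconnected with at least three vertices, let $C$ be an arbitrary cycle of $G$, and put $P_0=C$. Since every cycle of a simple graph has at least three vertices, $P_0$ is itself biconnected. I would then build the decomposition greedily, maintaining the invariant that the current subgraph $G_i$ is biconnected, and stopping when $G_i=G$. Given $G_i\ne G$, I choose the next ear by cases. If some edge $e\notin E(G_i)$ has both endpoints in $V(G_i)$, I take $P_{i+1}=e$, a $G_i$-path with no internal vertices. Otherwise every edge outside $G_i$ has an endpoint outside $V(G_i)$, so $V(G_i)\ne V(G)$; by connectivity there is an edge $\{u,w\}$ with $u\in V(G_i)$ and $w\notin V(G_i)$, and I must extend it into an \emph{open} $G_i$-path. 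Because each step strictly enlarges $E(G_i)$ and $G$ is finite, the process terminates with $G_r=G$; as $P_0$ was an arbitrary cycle, the ``moreover'' clause follows immediately.

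The crux — and the only place where biconnectivity is genuinely used — is manufacturing an \emph{open} (rather than closed) ear in the second case. Here I would invoke that $u$ is not a cut vertex of $G$: then $G-u$ is connected, and since the starting cycle ensures $V(G_i)\setminus\{u\}$ is nonempty, there is a path in $G-u$ from $w$ to $V(G_i)$. Truncating this path at its first vertex $v\in V(G_i)$ gives a segment whose internal vertices lie entirely outside $V(G_i)$; prepending the edge $\{u,w\}$ produces a $G_i$-path from $u$ to $v$. The main obstacle to watch is the well-definedness of this ear: I must confirm that $v\ne u$ (guaranteed because the path lives in $G-u$, making the ear a genuine path rather than a cycle) and that the truncated segment shares no edge with $G_i$ (guaranteed because its interior avoids $V(G_i)$). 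With the open ear in hand, the ear-addition lemma from the first direction shows $G_{i+1}$ is again biconnected, which closes the induction and completes the construction.
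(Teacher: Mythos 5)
Your proof is correct, but there is nothing in the paper to compare it against line by line: the paper quotes this theorem from \cite{West00} without proof and uses it as a black box (notably in the ``if'' direction of Theorem~\ref{thm:bi-strong}). The closest in-paper analogue is the proof of Theorem~\ref{thm:bi-strong} itself, which has the same greedy skeleton as your converse direction --- start from an arbitrary cycle and attach ears until the whole (di)graph is covered --- but handles the crux differently. You construct the next open ear \emph{directly}: pick a crossing edge $\{u,w\}$ with $u \in V(G_i)$ and $w \notin V(G_i)$, use the fact that $G-u$ is connected to route from $w$ back to $V(G_i)\setminus\{u\}$, and truncate at the first return to $V(G_i)$. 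The paper instead argues by \emph{contradiction}: assuming no $D_i$-path exists, it shows every outside vertex attaches to $D_i$ through a unique vertex, partitions $V(D)\setminus V(D_i)$ into the sets $F_{v^\prime}$, and exhibits a cut vertex of $\mathcal{G}(D)$, contradicting biconnectivity. Your direct route is shorter and is exactly what the undirected setting permits: deleting the single vertex $u$ leaves a connected graph in which the ear can be found as an ordinary path, whereas for digraphs a path in $\mathcal{G}(D)-u$ need not respect arc directions, which is why the paper falls back on the indirect argument. Your ``if'' direction (ear-addition lemma plus induction) is likewise self-contained, whereas the corresponding direction of the paper's digraph theorem reduces to the cited undirected results. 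One small patch: before invoking ``an arbitrary cycle $C$ of $G$'' you should note that a cycle exists at all --- a biconnected graph on at least three vertices cannot be acyclic, since it would then be a tree, and a tree with at least three vertices has an internal vertex, which is a cut vertex; with $C$ in hand, the rest of your argument (including the ``moreover'' clause, since $C$ is arbitrary) goes through.
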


\begin{thm}[\cite{BangGutin00}]\label{thm:strong}
Let $D$ be a nontrivial digraph. $D$ is strongly connected iff $D$
has a closed ear decomposition. Moreover, any cycle can be the
starting point of a closed ear decomposition.
\end{thm}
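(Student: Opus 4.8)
The plan is to prove both implications of the equivalence, obtaining the ``moreover'' clause as a byproduct of the construction in the harder direction. For the easy direction, that a closed ear decomposition forces strong connectivity, I would argue by induction on the number of ears $r$, showing that each partial union $D_i=\bigcup_{0\le j\le i}P_j$ is strongly connected. The base case $D_0=P_0$ is a cycle, which is strongly connected since one can travel from any of its vertices to any other by following the arcs around it. For the inductive step, assume $D_i$ is strongly connected and consider $P_{i+1}$. If $P_{i+1}$ is a $D_i$-path with tail $u$ and head $v$ (both in $D_i$, all internal vertices new), then every new internal vertex $w$ reaches $v\in D_i$ by continuing along $P_{i+1}$ and is reached from $u\in D_i$ by following $P_{i+1}$ from its tail; combined with the strong connectivity of $D_i$, which links $u$ and $v$ to all of $D_i$ in both directions, this shows $D_{i+1}$ is strongly connected. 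If $P_{i+1}$ is a $D_i$-cycle meeting $D_i$ in the single vertex $u$, the same bookkeeping applies with $u$ playing the role of both endpoints. Since $D=D_r$, strong connectivity follows.

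For the converse, I would build a closed ear decomposition greedily, starting from an arbitrary prescribed cycle $P_0$; such a cycle exists in any nontrivial strongly connected digraph by concatenating a $u\to v$ path with a $v\to u$ path and extracting a cycle, and allowing $P_0$ to be arbitrary is exactly what yields the ``moreover'' clause. Given a strongly connected partial union $D_i\ne D$, I would extend it as follows. If some arc $(x,y)\in E\setminus E(D_i)$ has both endpoints in $V(D_i)$, I take $P_{i+1}=(x,y)$, a single-arc $D_i$-path. Otherwise there is a vertex $z\notin V(D_i)$; using strong connectivity, pick a path $S$ from $D_i$ to $z$, truncated so that only its tail $b_0$ lies in $D_i$, and a path $R$ from $z$ to $D_i$, truncated so that only its head $a_p$ lies in $D_i$. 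Concatenating $S$ and $R$ yields a walk from $b_0$ to $a_p$ all of whose intermediate vertices (including $z$) lie outside $D_i$. Shortcutting repeated vertices of this walk produces either a simple $D_i$-path (when $b_0\ne a_p$) or a simple $D_i$-cycle through $b_0$ (when $b_0=a_p$), which I take as $P_{i+1}$. Each extension adds at least one new vertex or arc, so the process terminates with $D_r=D$.

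The step I expect to be the main obstacle lies in the converse: guaranteeing that the extracted ear is a \emph{legal} closed ear — a simple path whose internal vertices avoid $D_i$ entirely, or a cycle meeting $D_i$ in exactly one vertex — while still making genuine progress. Two points need care: first, truncating $S$ and $R$ at their \emph{first} contact with $D_i$, so that the only $D_i$-vertices on the combined walk $S\cdot R$ are its two ends; and second, arguing that shortcutting the possibly non-simple walk $S\cdot R$ still leaves at least one internal vertex outside $D_i$, so that a new vertex is genuinely absorbed. Once these are handled, finiteness of $D$ closes the induction, and the freedom in choosing $P_0$ delivers the ``any cycle'' assertion at no extra cost.
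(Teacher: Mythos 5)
The paper does not prove this theorem at all: it is imported from \cite{BangGutin00} as a known result, so there is no in-paper proof to compare against. Judged on its own, your argument is correct, and it is essentially the standard proof: an induction showing each partial union $D_i$ is strongly connected in one direction, and a greedy ear-absorption argument in the other, with the arbitrariness of $P_0$ giving the ``moreover'' clause for free. Both of the care points you flag do close. For the truncation, take the suffix of a path from $D_i$ to $z$ starting at its last vertex in $V(D_i)$, and the prefix of a path from $z$ to $D_i$ ending at its first vertex in $V(D_i)$. For progress under shortcutting: every arc of the walk $S\cdot R$ has at least one endpoint outside $V(D_i)$ (the first arc of $S$ already leaves $V(D_i)$, the last arc of $R$ enters it only at $a_p$, and all intermediate arcs join vertices outside $D_i$). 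Hence the walk contains no arc from $b_0$ straight to $a_p$, so the shortcut ear retains at least one internal vertex, which is necessarily new; the same observation shows that no arc of the ear lies in $E(D_i)$, which the paper's definition of a $D_i$-path also demands. In the case $b_0=a_p$, the worst degeneration is the $2$-cycle $b_0\to s_1\to b_0$, which is a legal digraph cycle under the paper's definition (digraph cycles may have just two vertices), hence a legal $D_i$-cycle. Finally, your greedy construction is exactly the technique the paper itself uses to prove Theorem~\ref{thm:bi-strong}, the open-ear analogue for strongly biconnected digraphs; the closed-ear case you handle is the easier one, since a return path meeting $D_i$ at the same vertex it left from is allowed as a $D_i$-cycle, whereas in the open-ear case that situation must be excluded, and doing so requires the biconnectedness of $\mathcal{G}(D)$.
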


Let $G=(V,E)$ be a graph. The \emph{biconnected-component graph} of
$G$, denoted $\mathcal{G}_{bc}(G)$, is a bipartite graph
$(V_{bc},W_{bc},E_{bc})$ defined by
\begin{itemize}
\item $V_{bc}$: biconnected components of $G$;

\item $W_{bc}$: vertices of $G$ shared by
at least two distinct biconnected components of $G$;

\item $E_{bc}$: let $B \in V_{bc}$ and $w \in W_{bc}$, then $(B,w) \in E_{bc}$
iff $w \in V(B)$.
\end{itemize}

\begin{thm}[\cite{West00}]\label{thm:biconn-components}
Let $G=(V,E)$ be a connected graph. Then $\mathcal{G}_{bc}(G)$ is a
tree.
\end{thm}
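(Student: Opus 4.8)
The plan is to establish that $\mathcal{G}_{bc}(G)$ is connected and acyclic, which is exactly what it means to be a tree. Both parts rest on a single preliminary lemma that I would prove first: \emph{any two distinct biconnected components of $G$ share at most one vertex.} The argument is by maximality. Suppose two distinct blocks $B_1,B_2$ shared two distinct vertices $u,v$. I would show that $B_1\cup B_2$ is then itself biconnected, contradicting the maximality of $B_1$ and $B_2$. It is connected, since $B_1$ and $B_2$ are connected and meet at $u$. For the absence of cut vertices, take any vertex $x$ of $B_1\cup B_2$; as each $B_i$ is biconnected, $B_i-x$ is connected and still contains at least one of $u,v$ (the set $\{u,v\}$ has two elements, and $x$ removes at most one), so $B_1-x$ and $B_2-x$ are connected and share a vertex, whence $(B_1\cup B_2)-x$ is connected. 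Two immediate consequences, which I would record, are that every edge of $G$ lies in exactly one block (an edge induces a biconnected $K_2$, hence lies in some maximal block, and in only one by the lemma) and that every vertex shared by two blocks is a cut vertex.

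For connectivity of $\mathcal{G}_{bc}(G)$, I would reduce to joining two block-nodes $B,B'$, since every cut-vertex node is adjacent to a block. Picking $u\in V(B)$ and $u'\in V(B')$, connectivity of $G$ yields a path $P$ from $u$ to $u'$. Reading $P$ edge by edge and using that two consecutive edges lying in different blocks must meet at a vertex shared by those blocks, which is necessarily a cut vertex by the lemma's corollary, I would translate $P$ into a walk in $\mathcal{G}_{bc}(G)$ from $B$ to $B'$. Hence $\mathcal{G}_{bc}(G)$ is connected.

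The main obstacle is acyclicity, which I would prove by contradiction using a \emph{shortest} cycle of $\mathcal{G}_{bc}(G)$. Since the graph is bipartite, such a cycle has the form $B_0\,w_0\,B_1\,w_1\cdots B_{k-1}\,w_{k-1}\,B_0$ with the $B_i$ distinct blocks, the $w_i$ distinct cut vertices, $k\ge 2$, and $w_i\in V(B_i)\cap V(B_{i+1})$ (indices mod $k$). Minimality forces the cycle to be chordless, and combined with the lemma this guarantees that no two of these blocks share a vertex other than the indicated $w_i$'s (an extra shared cut vertex would produce a shorter cycle, or, for cycle-adjacent blocks, two shared vertices contradicting the lemma). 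I would then prove that $H:=\bigcup_i B_i$ is biconnected. Connectivity of $H$ is clear, and for cut-vertex-freeness I would delete an arbitrary vertex $x$ and verify $H-x$ stays connected: if $x$ is internal to a single block, that block survives connected and every gluing cut vertex remains; if $x=w_j$, then $B_j-w_j$ and $B_{j+1}-w_j$ stay connected, each still carrying a neighbouring $w$, and removing the one link $w_j$ only breaks the cyclic chain of blocks into a path, which is still connected. (Alternatively, when $H$ has at least three vertices one can assemble an open ear decomposition and invoke Theorem~\ref{thm:bi-connected}.) Once $H$ is biconnected, maximality forces $B_i=H$ for every $i$, contradicting that the $B_i$ are distinct with $k\ge 2$. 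Therefore $\mathcal{G}_{bc}(G)$ has no cycle; being connected and acyclic, it is a tree.
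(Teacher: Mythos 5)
The paper offers no proof of this statement at all: it is imported from \cite{West00} as a known fact, so there is no internal argument to compare against. Your proof is the standard textbook one, and it is correct and essentially complete: the key lemma that two distinct biconnected components share at most one vertex (via the maximality contradiction on $B_1\cup B_2$), then connectivity of $\mathcal{G}_{bc}(G)$ by translating a path of $G$ into a walk of the block graph, then acyclicity by showing that the union $H$ of the blocks around a hypothetical shortest cycle is itself biconnected, again contradicting maximality; your case analysis for cut-vertex-freeness of $H$ (deleting an internal vertex versus deleting some $w_j$, using that minimality of the cycle rules out any sharing beyond the $w_i$'s) is exactly the right argument. Two small glosses are worth flagging, neither fatal. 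First, your ``immediate consequence'' that a vertex shared by two blocks is a cut vertex of $G$ is not immediate---it needs a short cycle argument (a path between neighbours of $w$ in the two blocks avoiding $w$ would force the two blocks to coincide, via the lemma)---but you never actually need it: this paper defines $W_{bc}$ as the set of vertices shared by at least two blocks, so the vertex where two consecutive path edges change blocks is a node of $\mathcal{G}_{bc}(G)$ by definition. Second, in the connectivity step, the block containing the first edge of $P$ need not be $B$ itself; when it is not, the two blocks share the endpoint $u$, which supplies the missing first step of the walk. Both repairs are one line each.
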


\smallskip

\noindent Now we introduce a new class of digraphs, strongly
biconnected digraphs.

\begin{defn}
Let $D$ be a digraph. $D$ is said to be strongly biconnected if $D$
is strongly connected and $\mathcal{G}(D)$ is biconnected. The
strongly biconnected components of $D$ are the maximal strongly
biconnected sub-digraphs of $D$.
\end{defn}
%
%
%
In particular, the digraph containing exactly one vertex and no arcs
is strongly biconnected.

\noindent We now show that strongly biconnected digraphs also admit
a similar characterization.

\begin{thm}\label{thm:bi-strong}
Let $D$ be a nontrivial digraph. $D$ is strongly biconnected iff $D$
has an open ear decomposition. Moreover, any cycle can be the
starting point of an open ear decomposition.
\end{thm}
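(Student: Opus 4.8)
The plan is to prove the two implications separately, in each case transferring between a directed open ear decomposition of $D$ and the two defining properties of strong biconnectivity by means of the two decomposition theorems already available: Theorem~\ref{thm:strong} (closed ear decompositions characterise strong connectivity) governs the directed side, while Theorem~\ref{thm:bi-connected} (open ear decompositions characterise biconnectivity) governs the underlying graph $\mathcal{G}(D)$. Throughout I keep in mind the degenerate case in which a directed cycle has length two: its underlying graph is a single edge rather than a graph-cycle, so the two-vertex situation must be checked by hand (a single edge is biconnected by definition, and a directed $2$-cycle is, by itself, an open ear decomposition).

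For the direction ``$D$ has an open ear decomposition $\Rightarrow$ $D$ is strongly biconnected'', first observe that every open ear decomposition is in particular a closed one (a $D_i$-path is an allowed ear), so Theorem~\ref{thm:strong} immediately yields that $D$ is strongly connected, and in particular $\mathcal{G}(D)$ is connected. To see that $\mathcal{G}(D)$ is biconnected I push the decomposition $P_0,\dots,P_r$ down to the underlying graph. Each $\mathcal{G}(P_{i+1})$ is a $\mathcal{G}(D_i)$-path \emph{unless} $P_{i+1}$ is a single arc $(u,v)$ whose reverse $(v,u)$ already lies in $D_i$; in that exceptional case $\mathcal{G}(P_{i+1})$ is an edge already present in $\mathcal{G}(D_i)$, so it leaves $\mathcal{G}(D_i)$ unchanged and may simply be deleted from the sequence. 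After deleting these redundant ears, and (when $P_0$ is a $2$-cycle) merging $\mathcal{G}(P_0)$ with the first surviving ear into a genuine graph-cycle, the remaining sequence is an open ear decomposition of $\mathcal{G}(D)$; Theorem~\ref{thm:bi-connected} then gives that $\mathcal{G}(D)$ is biconnected, the two-vertex case being handled directly. Hence $D$ is strongly biconnected.

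For the converse, and simultaneously for the ``moreover'' clause, I construct an open ear decomposition of $D$ starting from an arbitrary prescribed cycle $P_0$, building $D_i=\bigcup_{j\le i}P_j$ while maintaining the invariant that $D_i$ is strongly connected and $\mathcal{G}(D_i)$ is biconnected. The base case holds because $P_0$ is a directed cycle whose underlying graph is either a cycle or, for a $2$-cycle, a single edge, both biconnected. For the inductive step assume $D_i\ne D$. If some arc $(u,v)\in E\setminus E(D_i)$ has both endpoints already in $V(D_i)$, take $P_{i+1}=(u,v)$: this single-arc open ear keeps $D_i$ strongly connected and, adding at most one edge to $\mathcal{G}(D_i)$, keeps it biconnected. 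Otherwise $X:=V\setminus V(D_i)\ne\emptyset$, and we must find a \emph{directed} $D_i$-path that leaves $D_i$, runs through $X$, and returns to a \emph{different} vertex of $D_i$; adding such an open ear again preserves both halves of the invariant, since a directed path between two vertices of a strongly connected digraph preserves strong connectivity, and, by Theorem~\ref{thm:bi-connected}, a path joining two distinct vertices of a biconnected graph preserves biconnectivity.

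The crux of the whole proof is establishing, in this last case, the existence of such a directed open ear rather than merely a closed one: biconnectivity of $\mathcal{G}(D)$ furnishes an \emph{undirected} ear with distinct endpoints, but the arcs along it need not be coherently oriented, so it cannot be used verbatim. I expect to resolve this by combining biconnectivity with strong connectivity. Fix an entry arc $(a,x)$ with $a\in V(D_i)$ and $x\in X$; following directed paths that stay inside $X$, strong connectivity guarantees a directed return to some $b\in V(D_i)$, yielding a directed $D_i$-path with all internal vertices in $X$, and if some such return satisfies $b\ne a$ we are done. Assuming instead that no such return ever reaches a vertex different from its start, a short argument using strong connectivity shows that, for a fixed $x\in X$, every arc joining $D_i$ to the connected component of $x$ in $\mathcal{G}(D)-V(D_i)$ must be incident to one and the same vertex $w\in V(D_i)$; deleting $w$ then separates that nonempty component from the nonempty set $V(D_i)\setminus\{w\}$, so $w$ is a cut vertex of $\mathcal{G}(D)$, contradicting biconnectivity. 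Therefore a directed open ear always exists, the induction proceeds until $D_i=D$, and since $P_0$ was an arbitrary cycle the ``moreover'' clause follows as well.
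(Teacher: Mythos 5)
Your proposal is correct and takes essentially the same route as the paper: your ``if'' direction transfers the ear decomposition to $\mathcal{G}(D)$ via Theorem~\ref{thm:strong} and Theorem~\ref{thm:bi-connected}, and your ``only if'' direction grows $D_i$ greedily from an arbitrary cycle, with the same key claim that the absence of a directed open ear forces all connections between $D_i$ and the rest of the digraph through a single vertex, which is then a cut vertex of $\mathcal{G}(D)$, contradicting biconnectivity. Your explicit removal of redundant single-arc ears whose reverse arc already lies in $D_i$ is in fact slightly more careful than the paper's ``if'' part, but this refinement does not change the approach.
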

\begin{proof}

\noindent \textbf{``If'' part}: Suppose $D$ has an open ear
decomposition $P_0,\dots,P_r$.

Since open ear decompositions are special cases of closed ear
decompositions, from Theorem~\ref{thm:strong}, we know that $D$ is
strongly connected.

Let $P^\prime_i=\mathcal{G}(P_i)$, the underlying graph of $P_i$,
for all $0 \le i \le r$.

If $P_0$ is a cycle with at least $3$ vertices, then
$P^\prime_0,...,P^\prime_r$ is an open ear decomposition of
$\mathcal{G}(D)$, $\mathcal{G}(D)$ is biconnected according to
Theorem~\ref{thm:bi-connected}. So $D$ is strongly biconnected.

If $P_0$ is a cycle with only two vertices and $r=0$, then
$\mathcal{G}(D)$ is a graph with exactly two vertices connected by
an edge. $\mathcal{G}(D)$ is biconnected and $D$ is strongly
biconnected.

Otherwise, $P_0$ is a cycle with only two vertices and $r > 0$. Then
it is easy to see that $P^\prime_0 \cup P^\prime_1$ is a cycle of
$\mathcal{G}(D)$, so $\left(P^\prime_0 \cup
P^\prime_1\right),P^\prime_2,...,P^\prime_r$ is an open ear
decomposition of $\mathcal{G}(D)$. $\mathcal{G}(D)$ is biconnected
from Theorem~\ref{thm:bi-connected}. $D$ is strongly biconnected.

\noindent \textbf{``Only if'' part}: Suppose $D$ is nontrivial and
strongly biconnected. Consider the following procedure:

Initially select an arbitrary cycle in $D$, let $P_0$ be this cycle.

Suppose we have obtained $D_i= \bigcup \limits_{0 \le j \le i} P_j$.

Select a $D_i$-path in $D$ as $P_{i+1}$.

Continue until $D_i=D$.

The above procedure produces the desired open ear decomposition of
$D$, which is guaranteed by the following claim.

\smallskip
\noindent \textbf{Claim}. If $D_i \neq D$, there must be a
$D_i$-path in $D$.
\smallskip

\noindent \emph{Proof of the Claim}.

If $V(D)=V(D_i)$, then there must be arcs in $D$ but not in $D_i$,
which are the $D_i$-paths in $D$.

Otherwise, $V(D) \backslash V(D_i)$ is nonempty.

To the contrary, suppose that there are no $D_i$-paths in $D$.

For all $v \in V(D) \backslash V(D_i)$, we call the path from some
vertex in $D_i$ to $v$ such that none of its internal vertices are
in $D_i$, as the $(D_i,v)$-path, and the path from $v$ to some
vertex in $D_i$ such that none of its internal vertices are in
$D_i$, as the $(v, D_i)$-path. Moreover, we call the endpoint of a
$(D_i,v)$-path (resp.$(v,D_i)$-path ) that is in $D_i$ as the
$D_i$-endpoint of the $(D_i,v)$-path (resp. $(v,D_i)$-path).

Because $D$ is strongly connected, for all $v \in V(D) \backslash
V(D_i)$, there are $(D_i,v)$-paths and $(v,D_i)$-paths in $D$. Let
$\textrm{End}(D_i,v)$ and $\textrm{End}(v,D_i)$ be the set of
$D_i$-endpoints of $(D_i,v)$-paths and $(v,D_i)$-paths respectively.

For each $v \in V(D) \backslash V(D_i)$, if $v^{\prime} \in
\textrm{End}(D_i,v)$ and $v^{\prime\prime} \in \textrm{End}(v,D_i)$,
then we must have $v^\prime = v^{\prime\prime}$, because otherwise
we will have a $D_i$-path in $D$, contradicting to the assumption.
Therefore for each $v \in V(D) \backslash V(D_i)$,
$\textrm{End}(D_i,v)=\textrm{End}(v,D_i)$, and $\textrm{End}(D_i,v)$
is a singleton.

For all $v^\prime \in V(D_i)$, let $F_{v^{\prime}}$ be the set of
all $v \in V(D) \backslash V(D_i)$ such that
$\textrm{End}(D_i,v)=\{v^\prime\}$.

Then all the nonempty $F_{v^{\prime}}$'s ($v^{\prime} \in V(D_i)$)
form a partition of $V(D) \backslash V(D_i)$ (see
Figure~\ref{fig:bi-strong}).

\begin{figure}[ht]
\centering
  \includegraphics[width=0.25\textwidth]{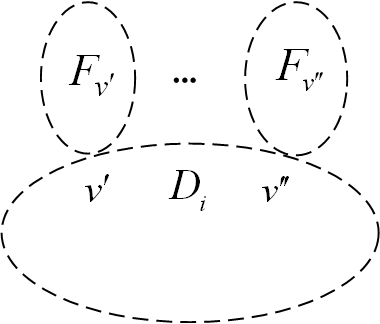}
  \caption{Partition of $V(D) \backslash V(D_i)$}\label{fig:bi-strong}
\end{figure}

Select an arbitrary $v^\prime \in V(D_i)$ such that $F_{v^{\prime}}$
is nonempty. We show that $v^\prime$ is a cut vertex in
$\mathcal{G}(D)$.

Let $v$ be a vertex in $F_{v^{\prime}}$, we show that all the arcs
incident to $v$ are confined to $F_{v^{\prime}} \cup \{v^\prime\}$,
namely if $e=(v,w) \in E$ or $e=(w,v) \in E$, then $w\in
F_{v^{\prime}}\cup \{v^\prime\}$.

To the contrary, suppose that there is an arc $e=(v,w)$ (the case of
$e=(w,v)$ is similar) such that $w \not \in F_{v^{\prime}} \cup
\{v^\prime\}$, then either $w \in V(D_i)$ and $w \neq v^\prime$, or
$w \in F_{v^{\prime\prime}}$ for some $v^{\prime\prime} \in V(D_i)$
such that $v^{\prime\prime} \neq v^\prime$.

Since $\textrm{End}(D_i,v)=\textrm{End}(v,D_i)$ is a singleton, the
former case is impossible.

For the latter case, we can get a $D_i$-path in $D$, contradicting
to the assumption.

Therefore, all the arcs incident to $v$ are confined to
$F_{v^{\prime}} \cup \{v^\prime\}$, as a consequence, all the paths
in $\mathcal{G}(D)$ from $v$ to vertices in $\mathcal{G}(D_i)$ go
through $v^\prime$, $v^\prime$ is a cut vertex in $\mathcal{G}(D)$,
contradicting to the fact that $\mathcal{G}(D)$ is biconnected
(since $D$ is strongly biconnected).

Consequently, when $V(D) \backslash V(D_i)$ is nonempty, there are
always $D_i$-paths in $D$.

We conclude that the claim holds and complete the proof of the
theorem.

\end{proof}

%
%

\noindent We can prove the following structural theorem for strongly
connected digraphs.

\begin{thm}\label{thm:strong2}
Let $D=(V,E)$ be a strongly connected digraph. Then the strongly
biconnected components of $D$ are those $D[V(B)]$, namely the
sub-digraph of $D$ induced by $V(B)$, where $B$ is a biconnected
component of $\mathcal{G}(D)$.
\end{thm}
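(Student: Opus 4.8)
The plan is to prove the statement by establishing two inclusions: every $D[V(B)]$ is a \emph{maximal} strongly biconnected sub-digraph of $D$, and conversely every strongly biconnected sub-digraph of $D$ is contained in some $D[V(B)]$. Here $B$ ranges over the biconnected components of $\mathcal{G}(D)$, and I note at the outset that since $\mathcal{G}(D)$ is connected and nontrivial, every such $B$ has at least two vertices.

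First I would show that each $D[V(B)]$ is strongly biconnected. The underlying-graph condition is the easy half: since $B$ is a maximal biconnected subgraph of $\mathcal{G}(D)$, adding to $B$ any edge of $\mathcal{G}(D)$ whose two endpoints already lie in $V(B)$ would keep it biconnected, contradicting maximality; hence $B=\mathcal{G}(D)[V(B)]$, which is precisely $\mathcal{G}(D[V(B)])$, and so $\mathcal{G}(D[V(B)])$ is biconnected. The harder half is strong connectivity of $D[V(B)]$, which I expect to be the main obstacle. Given $u,w\in V(B)$, strong connectivity of $D$ yields a directed path $\pi$ from $u$ to $w$, and the goal is to argue that $\pi$ never leaves $V(B)$. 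If it did, I would take the first maximal sub-path of $\pi$ that leaves $V(B)$: it exits at some $c_1\in V(B)$, runs through vertices outside $V(B)$, and re-enters $V(B)$ at some $c_2\in V(B)$. If $c_1\neq c_2$, this excursion is, in $\mathcal{G}(D)$, a $B$-path whose internal vertices lie outside $V(B)$ and whose edges are not in $B$; appending it as a new ear to an open ear decomposition of $B$ (Theorem~\ref{thm:bi-connected}), or joining it with the single edge of $B$ when $B$ has only two vertices, produces a biconnected subgraph strictly larger than $B$, contradicting maximality. If instead $c_1=c_2$, then $\pi$ revisits a vertex, contradicting that $\pi$ is a path. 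Either way $\pi$ stays inside $V(B)$, so $\pi$ is a path of $D[V(B)]$; hence $D[V(B)]$ is strongly connected, and therefore strongly biconnected.

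Next I would show that every strongly biconnected sub-digraph $H$ of $D$ lies inside a single $D[V(B)]$. Since $\mathcal{G}(H)$ is a biconnected subgraph of $\mathcal{G}(D)$, any two of its edges lie on a common cycle, so all of its edges belong to one and the same biconnected component $B$ of $\mathcal{G}(D)$; thus $V(H)\subseteq V(B)$, and since every arc of $H$ has both endpoints in $V(H)\subseteq V(B)$, we obtain $H\subseteq D[V(B)]$.

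Finally I would combine the two parts. The second part shows that the only candidates for maximal strongly biconnected sub-digraphs are the $D[V(B)]$, and that no $D[V(B)]$ can be strictly enlarged inside another strongly biconnected sub-digraph: such an enlargement would again sit inside some $D[V(B')]$, forcing $V(B)\subseteq V(B')$, hence $B=B'$ by maximality of blocks. Together with the first part, which guarantees that each $D[V(B)]$ is itself strongly biconnected, this identifies the strongly biconnected components of $D$ exactly with the $D[V(B)]$. I would close by checking the degenerate case: a two-vertex block is a bridge of $\mathcal{G}(D)$, and strong connectivity of $D$ forces both arcs between its two endpoints to be present, so $D[V(B)]$ is strongly connected there too and the argument is uniform.
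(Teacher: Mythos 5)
Your proof is correct, and its overall skeleton matches the paper's: both reduce the theorem to showing that $D[V(B)]$ is strongly connected by taking a directed path between two vertices of a block $B$ and arguing it can never leave $V(B)$, and then conclude by maximality. The difference is in the key lemma used to derive the contradiction when the path does leave $V(B)$. The paper looks only at the first exiting arc $(w^\prime,v^\prime)$: by Theorem~\ref{thm:biconn-components} the blocks of $\mathcal{G}(D)$ form a tree and two blocks share at most one vertex, so $w^\prime$ is the unique vertex of $B$ in the block containing $v^\prime$, and any continuation from $v^\prime$ back to a vertex of $V(B)$ must pass through $w^\prime$ a second time, contradicting that a path has no repeated vertices. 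You instead take the entire excursion from its exit vertex $c_1$ to its re-entry vertex $c_2$, observe that it is a $B$-path of $\mathcal{G}(D)$, and append it as a new ear (Theorem~\ref{thm:bi-connected}) to obtain a biconnected subgraph strictly containing $B$, contradicting maximality of the block. Your route avoids the block-tree theorem entirely, at the price of invoking the ear-decomposition characterization and handling two-vertex blocks separately; the paper's route is shorter once Theorem~\ref{thm:biconn-components} is in hand. Your write-up is also more explicit on two points the paper glosses over: that blocks are induced subgraphs of $\mathcal{G}(D)$ (so that $\mathcal{G}(D[V(B)])=B$ is indeed biconnected), and that every strongly biconnected sub-digraph of $D$ lies inside a single block (via the standard fact that any two edges of a biconnected graph lie on a common cycle), which the paper asserts in a single sentence when claiming maximality.
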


%
\begin{proof}

Let $D=(V,E)$ be a strongly connected digraph.

If $D$ is trivial, then the result is obvious.

Otherwise, $D$ is nontrivial, let $B$ be a biconnected component of
$\mathcal{G}(D)$.

It is sufficient to show that $D[V(B)]$ is strongly connected. If
this holds, then $D[V(B)]$ is strongly biconnected. Because all the
vertices of a strongly biconnected sub-digraph of $D$ are in some
 biconnected component of $\mathcal{G}(D)$ and $B$ is a biconnected component of $\mathcal{G}(D)$,
$D[V(B)]$ is a maximal strongly biconnected sub-digraph of $D$, i.e.
a strongly biconnected component of $D$. Since the union of all
biconnected components of $\mathcal{G}(D)$ is $\mathcal{G}(D)$
itself, the theorem holds.

Now we show that $D[V(B)]$ is strongly connected.

Let $v,w \in V(B)$ such that $v \ne w$. Since $D$ is strongly
connected, there must be a path $P$ from $v$ to $w$ in $D$. Now we
show that $P$ is in $D[V(B)]$ as a matter of fact.

To the contrary, suppose that there is a vertex on $P$ not in
$D[V(B)]$.

Let $v^\prime$ be the first vertex on $P$ (starting from $v$) not in
$D[V(B)]$. Then there is $w^\prime \in V(B)$ on $P$ such that
$(w^\prime,v^\prime) \in E$. Because $B$ is a biconnected component
of $\mathcal{G}(D)$, and two distinct biconnected components contain
at most one vertex in common according to
Theorem~\ref{thm:biconn-components}, it follows that $v^\prime$ is
in a biconnected component $B^\prime \neq B$ of $\mathcal{G}(D)$,
and $w^\prime$ is the unique vertex shared by $B^\prime$ and $B$.
Since $P$ is a path, we have that $w^\prime \ne w$, otherwise we
have reached $w$ before $v^\prime$ on $P$, a contradiction. Because
$w \in V(B)$ and $w \ne w^\prime$, we have that $w \in V(B)
\backslash V(B^\prime)$. Since $w^\prime$ is the unique vertex
shared by $B$ and $B^\prime$, any path from $v^\prime$ to $w \in
V(B) \backslash V(B^\prime)$ has to visit $w^\prime$, so $P$ must
visit $w^\prime$ again after visiting $v^\prime$, contradicting to
the fact that $P$ is a path and there should be no vertices visited
twice on a path.

Consequently for any $v,w \in V(B)$, $v \ne w$, there is a path in
$D[V(B)]$ from $v$ to $w$, $D[V(B)]$ is strongly connected.
\end{proof}

From Theorem~\ref{thm:strong2}, we have the following definition for
strongly-biconnected-component graph of a strongly connected
digraph.

\begin{defn}
Let $D$ be a strongly connected digraph, the
strongly-biconnected-component graph of $D$, denoted
$\mathcal{G}_{sbc}(D)=(V_{sbc},W_{sbc},E_{sbc})$, is
$\mathcal{G}_{bc}\left(\mathcal{G}(D)\right)$, namely the
biconnected-component graph of the underlying graph of $D$.
\end{defn}

From the above definition and Theorem~\ref{thm:biconn-components},
we have the following corollary.
\begin{cor}\label{cor:G_bis}
Let $D$ be a strongly connected digraph. Then $\mathcal{G}_{sbc}(D)$
is a tree.
\end{cor}

\begin{exmp}[Strongly-biconnected-component graph]
A strongly connected digraph $D$
(Figure~\ref{fig:strong-example}(a)) and its
strongly-biconnected-component graph $\mathcal{G}_{sbc}(D)$
(Figure~\ref{fig:strong-example}(b)).

\begin{figure}[ht]
\centering
  \includegraphics[width=0.75\textwidth]{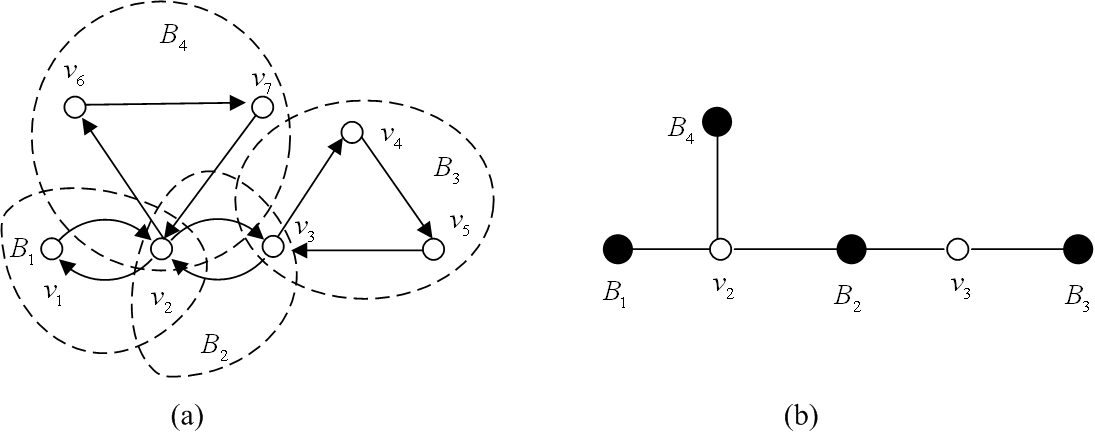}
  \caption{Example: strongly-biconnected-component graph}\label{fig:strong-example}
\end{figure}
\end{exmp}

\section{Motion planning on strongly connected digraphs}\label{sec:strong-algo}

At first, we make the following observation about motion planning on
strongly connected digraphs.

\begin{prop}\label{prop:motion-strong-observ}
Let $D=(V,E)$ be a strongly connected digraph. Then
\begin{enumerate}
\item If the robot and a hole are in the same cycle $C$ of $D$, then the robot can be moved to any vertex of $C$.

\item The movement of objects (robot or obstacles) in $D$ preserves the feasibility of motion
planning on $D$.
\end{enumerate}
\end{prop}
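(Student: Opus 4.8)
The plan is to prove the two parts separately, with the first serving as the engine for the second. For item~1 I would fix a directed cycle $C = v_0 v_1 \cdots v_{L-1}$ (indices read modulo $L$) carrying the robot at some $v_a$ and a hole at some $v_b \neq v_a$, and reduce everything to a single subroutine: advancing the robot one vertex forward along $C$. To advance the robot from $v_a$ to $v_{a+1}$, I would first route a hole to $v_{a+1}$ by sliding it backward around $C$ (each backward step of a hole is a legal forward move of the object sitting at its predecessor), choosing the arc of $C$ that does not pass through $v_a$ so that the robot never has to move; once the hole sits at $v_{a+1}$, the robot slides into it. Iterating this subroutine advances the robot step by step, so it reaches every vertex of $C$. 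The only point to check is that a hole can always be brought to the vertex immediately ahead of the robot without being forced through the robot's current vertex, which holds because deleting one vertex from a directed cycle leaves a directed path, giving an unobstructed backward route for the hole.

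For item~2 the plan is to recast feasibility as an invariant of an equivalence relation. Let $\to^{*}$ denote reachability between configurations under legal moves; it is reflexive and transitive by definition, so it suffices to prove that it is \emph{symmetric}, i.e. that every single legal move is reversible. Granting symmetry, $\to^{*}$ is an equivalence relation, feasibility (the robot can eventually reach $t$) is constant on each class, and since any move keeps a configuration inside its class, feasibility is preserved. To prove a single move reversible, suppose it slides an object across an arc $(v,w)$, leaving a hole at $v$. By strong connectivity there is a path from $w$ back to $v$; taking a shortest such path (internally disjoint from $\{v,w\}$) and prepending the arc $(v,w)$ yields a simple cycle $C$ through $(v,w)$. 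I would then reverse the move by circulating holes backward around $C$: on a directed cycle the only legal moves slide an object forward past an adjacent hole, so no object ever passes another object and no hole ever passes another hole; hence the cyclic order of the objects and the cyclic order of the holes around $C$ are invariants of reachability restricted to $C$. The pre-move configuration $\mathcal{C}$ and post-move configuration $\mathcal{C}'$ differ only by one such adjacent object--hole swap across $(v,w)$, so they share both cyclic orders, which I would argue makes them mutually reachable along $C$. This gives $\mathcal{C}' \to^{*} \mathcal{C}$, establishing symmetry.

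The conceptual crux --- and the step that genuinely uses strong connectivity --- is the reversibility claim itself; this is exactly the property that fails for general digraphs, as the motivating example with the move of the hole from $v_7$ to $v_2$ shows, so any correct argument must invoke the embedding of the arc into a cycle. The main technical obstacle I anticipate is verifying that the backward circulation restores the configuration \emph{exactly}, rather than up to a nontrivial rotation, when $C$ carries several holes and possibly the robot. I would handle this by leaning on the indistinguishability of obstacles and of holes (so that only the robot's position and the set of hole-positions must be matched) together with the cyclic-order invariant above, and, where the robot lies on $C$, by invoking item~1 to reposition it; pinning down this exact-restoration bookkeeping is where the real work lies.
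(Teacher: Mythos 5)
Your proposal is correct and takes essentially the same route as the paper: part~1 by sliding a hole backwards along the cycle ahead of the robot, and part~2 by embedding the arc $(v,w)$ into a cycle $C$ (via a path from $w$ to $v$ given by strong connectivity) and undoing the move by rotating the objects backwards around $C$ to restore the pre-move configuration, which is exactly the paper's ``restoration'' argument. Your equivalence-relation packaging and the cyclic-order invariant are just a more explicit write-up of the bookkeeping the paper handles informally (with a figure), so they are a refinement of, not a departure from, the paper's proof.
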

%
\begin{proof}

\noindent (i): it is obvious since the hole can be moved along the
reverse direction of the arcs in $C$ and the objects can be rotated
to any vertex in $C$.

\noindent (ii): Suppose we move an object from $v$ to $w$ along the
arc $(v,w) \in E$. We prove that the motion planning problem is
feasible before the movement iff it is feasible after the movement.

Since $(v,w) \in E$ and $D$ is strongly connected, there is a path
$P$ from $w$ to $v$ in $D$, let $C$ denote the cycle $P \cup
\{(v,w)\}$.

Suppose the motion planning problem is feasible before the movement.
Because after the movement, there is a hole in $v$, we can move the
hole along the reverse direction of $C$, rotate the objects along
$C$, and restore the situation before the movement, namely all the
objects return to the positions before the movement. An example of
this restoration is given in Figure~\ref{fig:cycle-rotate}. So the
motion planning problem is also feasible after the movement.

\begin{figure}[ht]
\centering
  \includegraphics[width=0.6\textwidth]{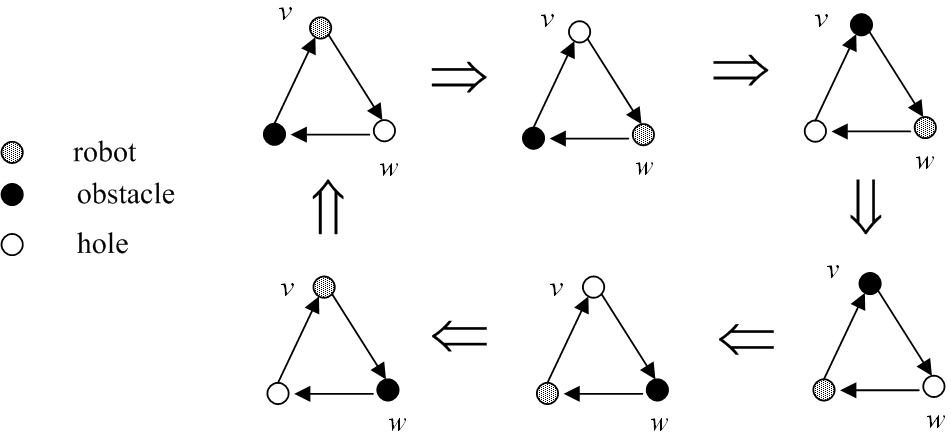}
  \caption{Restoration by rotating the objects in a cycle}\label{fig:cycle-rotate}
\end{figure}

The other direction is obvious.
\end{proof}

%
%
\smallskip
From \cite{PapadimitriouRST94}, we know that if a graph is
biconnected, then one hole is sufficient to move the robot from the
source vertex to the destination vertex, which is also the case for
strongly biconnected digraphs.
%
%
\begin{thm}\label{thm:feas-bi-strong}
Let $D$ be a strongly biconnected digraph. Then the motion planning
problem on $D$ is feasible iff there is at least one hole in $D$.
\end{thm}
%
\begin{proof}

``Only if'' part: obvious.

``If'' part:

Suppose $D$ is strongly biconnected, there is exactly one hole in
$D$ (the case that there are more than one hole is similar), the
source vertex is $s$ and the destination vertex is $t$.

From Theorem~\ref{thm:bi-strong}, we know that there is an open ear
decomposition $P_0,...,P_r$ of $D$.

Let $j_0$ be the minimal $j$ such that $s, t$ and the hole are all
in $D_j$, where $D_j=\bigcup \limits_{0 \le j^\prime \le j}
P_{j^\prime}$.

Induction on $j_0$.

Induction base $j_0=0$: $s$, $t$ and the hole are all in the cycle
$P_0$. Then move the hole along the reverse direction of the cycle
and move the robot to $t$.

Induction step $j_0 > 0$.

Let the tail and head endpoint of $P_{j_0}$ be $u^\prime$ and
$v^\prime$ respectively.

Because of minimality of $j_0$, we have the following three cases.

\textbf{Case I} $s$ is in $P_{j_0}$, $s \neq u^\prime, v^\prime$:

Select a path $P$ in $D_{j_0-1}$ from $v^\prime$ to $u^\prime$, then
$P_{j_0} \cup P$ is a cycle in $D$.

If the hole is not in $P_{j_0} \cup P$, the hole must be in
$D_{j_0-1}$, we can move it to $P$ in $D_{j_0}$ without moving the
robot in $s$.

If $t$ is in $P_{j_0} \cup P$, then move the hole along the reverse
direction of $P_{j_0} \cup P$ and move the robot to $t$.

Otherwise, move the hole along the reverse direction of $P_{j_0}
\cup P$ and move the robot to $v^\prime$. Now the hole is in
$P_{j_0}$, move the hole along the reverse direction of $P_{j_0}$,
until it reaches $u^\prime$.

Then the position of the robot, $v^\prime$, the destination $t$ and
the position of the hole, $u^\prime$, are all in $D_{j_0-1}$,
according to the induction hypothesis, we can move the robot to $t$.

\textbf{Case II} $s$ is in $D_{j_0-1}$, the hole is in some vertex
of $P_{j_0}$ different from $u^\prime$ and $v^\prime$:

Select a path $P$ in $D_{j_0-1}$ from $v^\prime$ to $u^\prime$, then
$P_{j_0} \cup P$ is a cycle in $D$.

If $t$ is in $D_{j_0-1}$ and $s \neq u^\prime$, we can move the hole
to $u^\prime$ along the reverse direction of $P_{j_0}$ without
moving the robot in $s$, then according to the induction hypothesis,
we can move the robot to $t$.

If $t$ is in $D_{j_0-1}$ and $s = u^\prime$, then move the hole
along the reverse direction of $P_{j_0} \cup P$ and move the robot
to $v^\prime$. Now the hole is in $P_{j_0}$, we can move the hole
along the reverse direction of $P_{j_0}$ to $u^\prime$. Then by the
induction hypothesis, we can move the robot to $t$.

If $t$ is not in $D_{j_0-1}$, then $t$ is in $P_{j_0}$.

If $s = u^\prime$, then we can move the hole along the reverse
direction of $P_{j_0} \cup P$ and move the robot to $t$.

Now we consider the case $s \neq u^\prime$.

We can move the hole along the reverse direction of $P_{j_0}$ to
$u^\prime$ without moving the robot. Then by the induction
hypothesis, we can move the robot from $s$ to $v^\prime$ in
$D_{j_0-1}$.

By the induction hypothesis again, we can move the robot from
$v^\prime$ to $u^\prime$ in $D_{j_0-1}$. Let the trace of the robot
during the movement from $v^\prime$ to $u^\prime$ be $P^\prime$.
Note that $P^\prime$ may contain cycles. Suppose the last arc of
$P^\prime$ is $(w,u^\prime)$ for some $w$. Then the hole is in $w$
after the movement. Without loss of generality, we assume that
during the movement, the robot visits $u^\prime$ only once since
$u^\prime$ is the destination. Consequently, the hole can be moved
from $w$ to $v^\prime$ along the reverse direction of $P^\prime$
without moving the robot in $u^\prime$ (see
Figure~\ref{fig:strong-biconn-motion}).

\begin{figure}[ht]
\centering
  \includegraphics[width=0.4\textwidth]{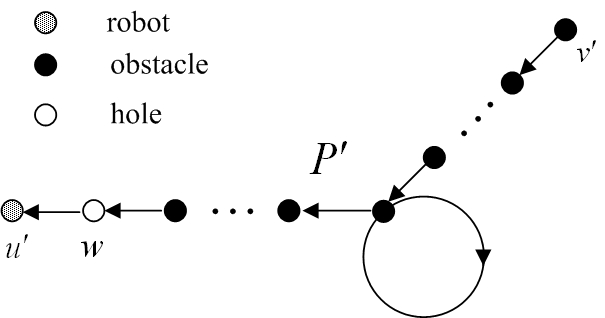}
  \caption{The case that the hole and $t$ are in $P_{j_0}$ different from $u^\prime$ and $v^\prime$, $s \in D_{j_0-1}$, and $s \neq u^\prime$}\label{fig:strong-biconn-motion}
\end{figure}

Since $P_{j_0} \cup P$ is a cycle, now we can move the hole along
the reverse direction of $P_{j_0} \cup P$ and move the robot to $t$.

\textbf{Case III} $s$ and the hole are both in $D_{j_0-1}$, $t$ is
in $P_{j_0}$, $t \neq u^\prime, v^\prime$:

We can move the hole in $D_{j_0-1}$ to $v^\prime$ with possible
movements of the robot in $D_{j_0-1}$. Suppose the new position of
the robot is $s^\prime$.

Now move the hole to some vertex in $P_{j_0}$ different from
$u^\prime$ and $v^\prime$, which is possible since $P_{j_0}$
contains at least three vertices. Then we have reduced Case III to
Case II.
\end{proof}
%
%
\noindent We introduce the following notation before giving the
algorithm.
\begin{defn}\label{defn:w-side-v}
Let $D=(V,E)$ be a strongly connected digraph, $u, v, w \in V$ such
that $v \ne w$, and $\mathcal{G}_{sbc}(D)=(V_{sbc}, W_{sbc},
E_{sbc})$ be the strongly-biconnected-component graph of $D$. Then
$u$ is said to be on the $w$-side of $v$, if $u \ne v$ and one of
the following two conditions holds:
\begin{enumerate}
\item $v \in W_{sbc}$, and $u, w$ are in the same connected component of $\mathcal{G}(D)-v$.

\item $v \not \in W_{sbc}$, and either $u, w$ are in the same connected component of $\mathcal{G}(D-V(B))$, or
$u \in V(B)$, where $B$ is the unique strongly biconnected component
of $D$ to which $v$ belongs.
\end{enumerate}
$u$ is said to be on the non-$w$-side of $v$ if $u \ne v$, and $u$
is not on the $w$-side of $v$.

A hole (resp. obstacle) is said to be on the $t$-side of the robot
if the position (vertex) of the hole (resp. obstacle) is on the
$t$-side of the position of the robot, and a hole (resp. obstacle)
is said to be on the non-$t$-side of the robot if the position of
the hole is on the non-$t$-side of the position of the robot.
\end{defn}

Note that if $u,v,w \in V$, $v \not \in W_{sbc}$, $v \ne w$, $v, w
\in V(B)$, where $B$ is the unique strongly biconnected component of
$D$ to which $v$ belongs, then $u$ is on the $w$-side of $v$ iff $u
\ne v$ and $u \in V(B)$ according to Definition~\ref{defn:w-side-v}.

\begin{exmp}[$t$-side of the robot]
In Figure~\ref{fig:w-side-v-example}(a), the robot is in $s \in
W_{sbc}$, two holes in $v_3$ and $v_4$ are on the $t$-side of the
robot, and the hole in $v_1$ is on the non-$t$-side of the robot. In
Figure~\ref{fig:w-side-v-example}(b), the robot is in $v_3 \not \in
W_{sbc}$, the hole in $v_2$ belongs to the same strongly biconnected
component as $v_3$, so $v_2$ is on the $t$-side of the robot, and
two holes in $v_1$ and $s$ are on the non-$t$-side of the robot.
\end{exmp}

\begin{figure}[ht]
\centering
  \includegraphics[width=0.7\textwidth]{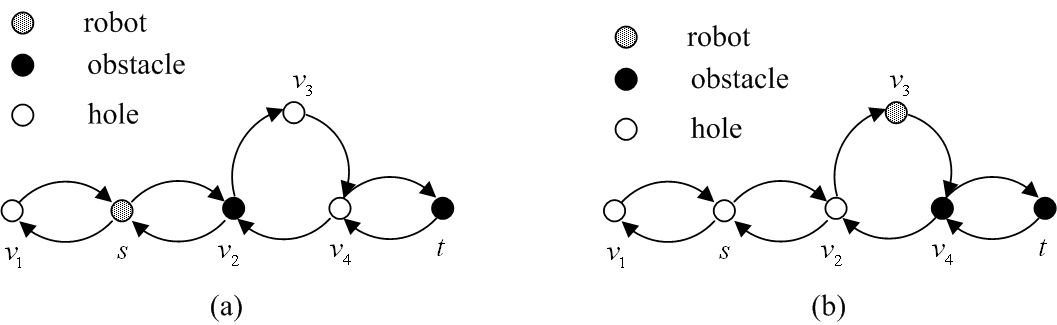}
  \caption{Example: $t$-side of the robot}\label{fig:w-side-v-example}
\end{figure}

%

Function FSCD (see Algorithm~\ref{Alg:feas-strong}) decides the
feasibility of motion planning problem on strongly connected
digraphs. FSCD is similar to the algorithm for motion planning on
graphs since strongly biconnected components of strongly connected
digraphs are similar to biconnected components of connected graphs.
%
%
\begin{algorithm}[ht]

\dontprintsemicolon \SetVline

\SetKwInOut{Input}{input}
\SetKwInOut{Output}{output}
\Input{$(D,s,t,f)$ such that $D=(V,E)$ is a strongly connected digraph, $s,t \in V$, $s \ne t$, and $f$ is a function from $V$ to \{``robot'',``obstacle'',``hole''\}.}
\Output{true or false.}


Construct the underlying graph of $D$, $\mathcal{G}(D)$, and
construct the biconnected-component graph of $\mathcal{G}(D)$ to get
$\mathcal{G}_{sbc}(D)=(V_{sbc},W_{sbc},E_{sbc})$.\;

\While{there are obstacles on the $t$-side of the robot and there
are holes on the non-$t$-side of the robot}
{
  Let $v \in V$ be the current position of the robot.\;
  \eIf{$v \in W_{sbc}$}
  {
    Select a strongly biconnected component $B$ such that $v \in V(B)$, all the vertices of $B$ are not on the $t$-side of $v$,
    and there is at least one hole on the $w$-side of $v$ for some $w \in V(B)$, $w \ne v$.\;

    \If{there are no holes in $B$}
    {
      There is $w \in V(B)$ and $w \in W_{sbc}$ such that there is at least one hole on the non-$t$-side of $w$,
      move one such hole to $w$ without moving the robot.
    }

    Move a hole in $B$ to $v$ and the robot is moved to some vertex $w^\prime \in V(B)$ such that $(v,w^\prime) \in E$.\;
  }
  {
%
    Let $B$ be the unique strongly biconnected component to which $v$ belongs ($B$ contains at least three vertices).\;

    \If{there are no obstacles in $B$}
    {
      Move an obstacle on the $t$-side of the robot into $B$ without moving the robot.
    }
    Move a hole on the non-$t$-side of the robot into $B$ by moving the robot if necessary, while keeping the robot inside $B$.\;

    Move the robot to $v$ again.
  }
}
Let the current position of the robot be $s^\prime$.\;

\eIf{$s^\prime$ and $t$ are in the same strongly biconnected
component}
{
  \eIf{there is at least one hole on the $t$-side of the robot}
  {\Return true.}
  {\Return false.}\;
}
{
  Let $P=B_0v_1B_1...B_{r-1}v_rB_r$ be the path in
  $\mathcal{G}_{sbc}(D)=(V_{sbc},W_{sbc},E_{sbc})$, such that $s^\prime \in
  B_0$, $t \in B_r$, $s^\prime \neq v_1$ and $t \neq v_r$.\;
  Let $l$ be the maximum of $j-i+1$ such that $1 \le i \le j \le r$, and
  $i,j$ satisfy the following conditions:\;
  \Indp 1. $i=1$, or $B_{i-1}$ contains at least three vertices, or
  there is some $B \in V_{sbc}$ such that $B$ is not on $P$ and $\{B,v_i\} \in E_{sbc}$;\;
  2. $j=r$, or $B_j$ contains at least three vertices, or there is
  some $B \in V_{sbc}$ such that $B$ is not on $P$ and $\{B,v_j\} \in E_{sbc}$;\;
  3. For all $i \le k < j$, $B_{k}$ contains only two vertices, and
  for all $i < k < j$, there is no $B \in V_{sbc}$ such that $B$ is
  not on $P$ and $\{B,v_k\} \in E_{sbc}$.\;
  \Indm \eIf{the number of holes on the $t$-side of the robot is no
  less than $l+1$}
  {\Return true.}
  {\Return false.}\;
}
\caption{FSCD($D,s,t,f$)\label{Alg:feas-strong}}
\end{algorithm}
\clearpage

\begin{exmp}[Computation of FSCD]
The strongly connected digraph is given in
Figure~\ref{fig:feas-strong-example-1}(a). At first, the robot is
moved from $s$ to $v_1$, all the holes are on the $t$-side of the
robot (see Figure~\ref{fig:feas-strong-example-1}(b)). Then
according to the definition of $l$ in FSCD, we have $l=3$. There are
four holes on the $t$-side of the robot, so FSCD returns ``true''.
Now we show how the robot is moved from $v_1$ to $t$ with the four
holes: three holes are moved to $s,v_2,v_8$ and the robot is moved
to $v_8$ (see Figure~\ref{fig:feas-strong-example-1}(c)), then the
holes are moved to $v_2,v_3,v_4,v_9$ (See
Figure~\ref{fig:feas-strong-example-1}(d)), the robot is moved to
$v_9$ (see Figure~\ref{fig:feas-strong-example-1}(e)), and all the
holes are moved to $v_5,v_6,v_7,t$ (see
Figure~\ref{fig:feas-strong-example-1}(f)), finally the robot is
moved to $t$.
\end{exmp}

\begin{figure}[ht]
\centering
  \includegraphics[width=0.75\textwidth]{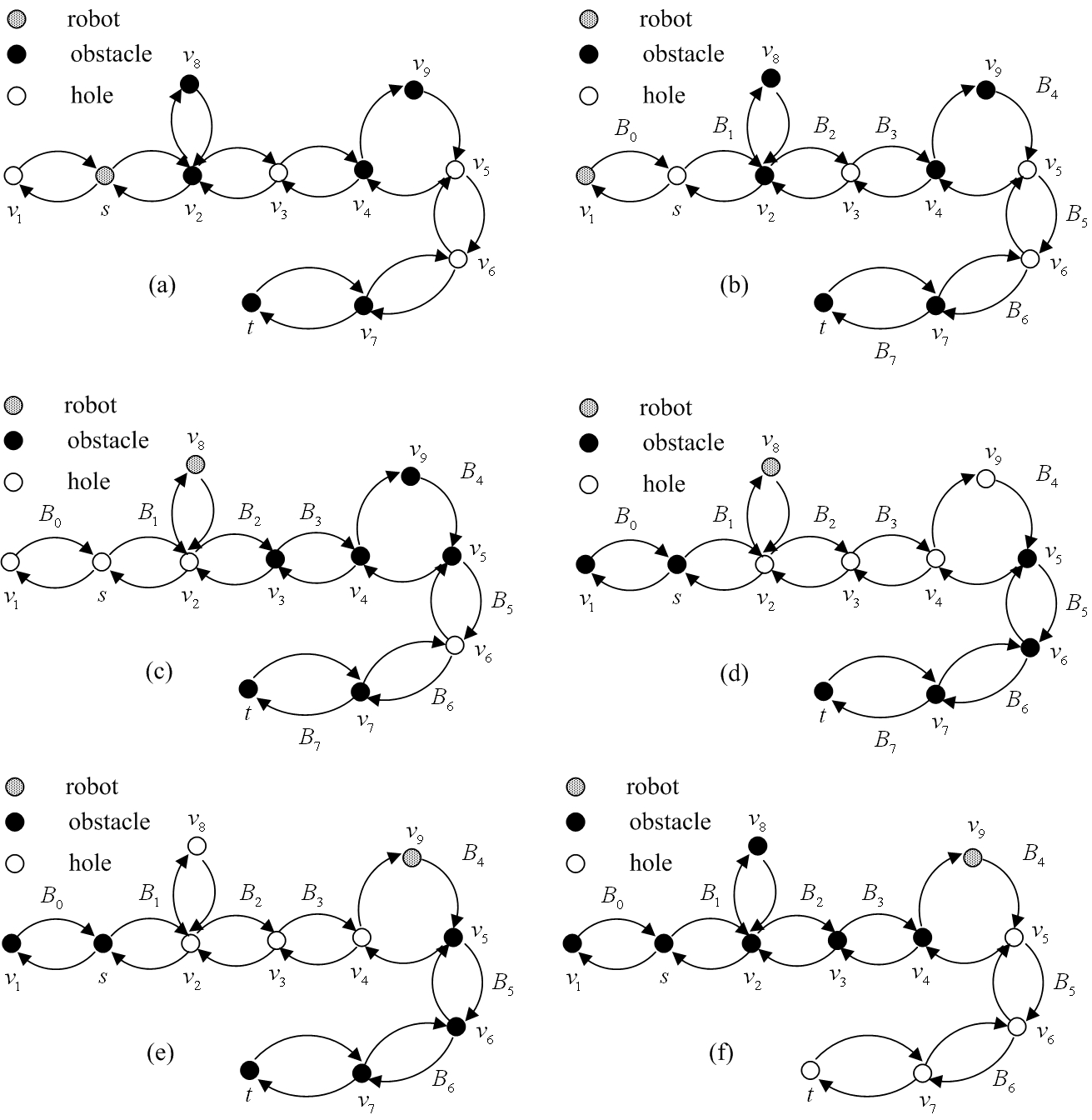}
  \caption{Example: motion planning on a strongly connected digraph}\label{fig:feas-strong-example-1}
\end{figure}

\begin{thm}
FSCD is correct.
\end{thm}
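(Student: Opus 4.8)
The plan is to prove that FSCD returns ``true'' exactly when the instance is feasible, by first isolating the preprocessing \texttt{while} loop and then analysing the two terminal returns. The starting point is Proposition~\ref{prop:motion-strong-observ}(ii): every elementary move FSCD makes preserves feasibility, so the given instance is feasible iff the configuration reached when the loop terminates is feasible. Since FSCD's output depends only on this terminal configuration, it suffices to (a) verify that the loop is well defined and halts, and (b) decide the feasibility of the terminal configuration. For (a) I would check that the component $B$ selected in each branch exists, justify the algorithm's assertion that in the non-cut-vertex branch $B$ has at least three vertices (otherwise, by the note following Definition~\ref{defn:w-side-v}, the non-$t$-side of $v$ would be empty and the loop guard would already have failed), and give a termination measure; I expect each iteration to destroy one instance of the bad pattern by moving a hole from the non-$t$-side to the $t$-side of the robot (equivalently pushing a $t$-side obstacle to the non-$t$-side), so that a weighted count of misplaced holes and obstacles strictly decreases. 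On termination the guard fails, so in the terminal configuration either no obstacle lies on the $t$-side of the robot or no hole lies on its non-$t$-side.

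The second step is a structural lemma reducing feasibility of the terminal configuration to a traversal of the unique tree path $P=B_0v_1B_1\cdots v_rB_r$ of $\mathcal{G}_{sbc}(D)$ furnished by Corollary~\ref{cor:G_bis}. I would show that the robot can reach $t$ only by crossing $P$ component by component and that, in the terminal configuration, only holes on the $t$-side of the robot can be used to advance it: a non-$t$-side hole is separated from $t$ by a cut vertex, and (using the loop post-condition together with non-reversibility) it cannot be ferried across to help. The engine of the argument is the dichotomy between \emph{flexible} components (whose underlying biconnected component contains a cycle, equivalently has at least three vertices) and \emph{rigid} $2$-vertex components. By Theorem~\ref{thm:strong2} each component is $D[V(B)]$ and is strongly biconnected, so by Theorem~\ref{thm:feas-bi-strong} a flexible component acts as a ``rotation chamber'': one internal hole suffices to slide the robot to any of its boundary cut vertices and, crucially, to pump further holes through it. A rigid component is a one-way corridor in which advancing the robot consumes exactly one hole, leaving it behind the robot.

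With this lemma I would settle the two returns. In the same-component case $s^\prime$ and $t$ lie in one strongly biconnected component $B^\ast=D[V(B^\ast)]$, which by Theorem~\ref{thm:feas-bi-strong} is internally solvable iff it holds a hole. I would show that a single hole on the $t$-side can always be routed into $B^\ast$ and then used to bring the robot to $t$, matching the ``true'' return; and conversely that if no hole is on the $t$-side then, by the loop post-condition, every hole is confined to the non-$t$-side and, by non-reversibility, unusable, so the instance is infeasible, matching the ``false'' return.

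In the different-component case I would read $l$ as the length of the longest maximal run of consecutive rigid components along $P$ flanked by flexible components or by genuine side branches of the tree, which is exactly what conditions~1--3 of the algorithm encode. The sufficiency direction is comparatively routine: with at least $l+1$ holes on the $t$-side one drives the robot along $P$, replenishing holes at each flexible component or side branch and supplying the bottleneck run with the $l+1$ holes that its length forces. The main obstacle will be the necessity direction here (and its analogue in the same-component case): the \emph{sharp} lower bound that at most $l$ holes on the $t$-side cannot push the robot past the bottleneck. This calls for a potential argument tracking the number of holes strictly ahead of the robot inside a rigid corridor, showing that each of the hops through the corridor irrecoverably consumes one such hole and --- the delicate heart of the proof --- that conditions~1--3 forbid any replenishment, since a fresh hole could enter the corridor only from a flanking flexible component or a side branch, neither of which is present in a bottleneck run. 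Once this monotone ``no-replenishment'' invariant is in place, combining it with the loop isolation of the first step yields the equivalence between FSCD's output and feasibility.
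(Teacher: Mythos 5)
Your plan follows the paper's own proof essentially step for step: isolate the ``While'' loop via Proposition~\ref{prop:motion-strong-observ}(ii), prove termination by the decrease in the number of holes on the non-$t$-side of the robot, settle the same-component case with Theorem~\ref{thm:feas-bi-strong}, and, in the different-component case, read $l$ as the longest run of two-vertex components without side branches, with sufficiency of $l+1$ holes shown by a replenishment construction along the tree path and necessity by the no-replenishment bottleneck argument. So the architecture is the right one and matches the published argument, including your identification of the no-replenishment invariant as the delicate point.

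There is, however, one concretely false step in your plan. You justify the algorithm's parenthetical claim that, when the robot sits at $v \not\in W_{sbc}$, its strongly biconnected component $B$ has at least three vertices, by asserting that otherwise the non-$t$-side of $v$ would be empty and the loop guard would already have failed. That is wrong. If $B=\{v,w\}$ has only two vertices, then $v$ is a leaf of $\mathcal{G}(D)$ whose unique neighbour is $w$, but $w$ may be a cut vertex carrying several branches, and by case~2 of Definition~\ref{defn:w-side-v} every connected component of $\mathcal{G}(D-V(B))$ not containing $t$ lies on the non-$t$-side of $v$. Concretely, take $V=\{s,w,a,b\}$ with arcs forming the three $2$-cycles $s\leftrightarrow w$, $w\leftrightarrow a$, $w\leftrightarrow b$, let $t=a$, put the robot at $s$, obstacles at $w$ and $a$, and a hole at $b$: the guard holds ($w$ and $a$ are obstacles on the $t$-side, $b$ is a hole on the non-$t$-side), the robot is at a non-cut vertex, and its component $\{s,w\}$ has two vertices. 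So the assertion you set out to justify is itself inaccurate as stated in the algorithm; what saves the correctness proof is that nothing needs it. Since $v$ is a leaf of $\mathcal{G}(D)$, no path between two vertices distinct from $v$ can pass through $v$, so $D-v$ is strongly connected; hence the loop body can still pull an obstacle onto $w$ (if necessary) and then pull a non-$t$-side hole onto $w$ without moving the robot, and the iteration still decreases the number of non-$t$-side holes by one. You should replace the false justification by this direct verification of the two-vertex case. A smaller remark of the same kind: in the same-component ``return false'' branch, your appeal to non-reversibility is vaguer than necessary --- if no hole lies on the $t$-side, then every vertex on the $t$-side (in particular $t$) carries an obstacle, so the failed loop guard forces the non-$t$-side to contain no holes either; hence there are no holes at all and infeasibility is immediate.
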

%
\begin{proof}

At first, we show that the ``While'' loop in FSCD terminates.

It is sufficient to show that each execution of the body of the
``While'' loop reduces the number of holes on the non-$t$-side of
the robot by $1$.

There are two cases.

\textbf{Case the robot is in some $v \in W_{sbc}$}.

Then $v$ is shared by several strongly biconnected components.

Because there are holes on the non-$t$-side of the robot, we can
select a strongly biconnected component $B$ such that $v \in V(B)$,
all the other vertices of $B$ different from $v$ are on the
non-$t$-side of $v$, and there is at least one hole on the $w$-side
of $v$ for some $w \in V(B)$, $w \ne v$.

If there are no holes in $B$, then there is $w^\prime \in V(B)$ and
$w^\prime \in W_{sbc}$ such that there is at least one hole on the
non-$t$-side of $w^\prime$, then one such hole can be moved to
$w^\prime$ without moving the robot.

Now there must be at least one hole in some $w^\prime \in V(B)$,
there is a path from $v$ to $w^\prime$ in $B$, we can move the hole
from $w^\prime$ to $v$ along the reverse direction of the path, and
the robot is moved to $w^{\prime\prime}$ on the path such that
$(v,w^{\prime\prime}) \in E$.

The hole moved to $v$ is on the non-$t$-side of the robot before the
movement. Now we show that the hole (in $v$) is on the $t$-side of
the robot after the movement according to
Definition~\ref{defn:w-side-v}: if $w^{\prime\prime} \in W_{sbc}$,
then $v$ is in the same connected component as $t$ in
$\mathcal{G}(D)-w^{\prime\prime}$, the hole in $v$ is on the
$t$-side of the robot ($w^{\prime\prime}$); if $w^{\prime\prime}
\not \in W_{sbc}$, since $B$ is the unique strongly biconnected
component to which $w^{\prime\prime}$ belongs, and $v$ is in $B$, so
the hole in $v$ is on the $t$-side of the robot ($w^\prime$) as
well.

Consequently, in the case that the robot is in some $v \in W_{sbc}$,
each execution of the ``While''-loop reduces the number of holes on
the non-$t$-side of the robot by $1$.

\textbf{Case the robot is in some $v \not \in W_{sbc}$}.

Let $B$ be the unique strongly biconnected component to which $v$
belongs.

Since there are holes on the non-$t$-side of the robot, and
according to Definition~\ref{defn:w-side-v}, holes in $B$ are on the
$t$-side of the robot, there must be some $w \in V(B)$, $w \ne v$,
$w \in W_{sbc}$ such that there is at least one hole on the
non-$t$-side of $w$.

Because there are obstacles on the $t$-side of the robot, if there
are no obstacles in $B$, we can move an obstacle on the $t$-side of
$v$ into $B$ without moving the robot. Now there must be at least
one obstacle in $B$.

If $w$ is not occupied by an obstacle, then an obstacle in $B$ can
be moved to $w$ by moving the robot if necessary. Now a hole on the
non-$t$-side of $w$ can be moved to $w$. Move the robot to $v$
again.

In this case, one hole on the non-$t$-side of the robot is moved
into $B$ and the robot returns to $v$ after the movement.
Consequently, in this case, the number of holes on the non-$t$-side
of the robot is reduced by $1$ as well.

\medskip

\noindent After the execution of the ``While'' loop, either there
are no obstacles on the $t$-side of the robot or all the holes are
on the $t$-side of the robot. In the former case, it is evident that
FSCD returns ``true'' eventually. Now we consider the latter case.

Suppose the current position of the robot is $s^\prime$ now.

If $s^\prime$ and $t$ are in the same strongly biconnected component
$B$, then it is easy to see that the problem is feasible iff there
is at least one hole on the $t$-side of the robot according to
Theorem~\ref{thm:feas-bi-strong}.

Otherwise, let $l$ be the number as defined in FSCD, we show that
the problem is feasible iff there are at least $l+1$ holes.

``Only If'' part: Suppose the problem is feasible.

Then according to Proposition~\ref{prop:motion-strong-observ}, it is
still feasible after the execution of the ``While''-loop.

To the contrary, suppose that there are at most $l$ holes.

Let $P=B_0v_1B_1...B_{r-1}v_rB_r$ be the path in
$\mathcal{G}_{sbc}(D)=(V_{sbc},W_{sbc},E_{sbc})$, such that
$s^\prime \in B_0$, $t \in B_r$, $s^\prime \neq v_1$ and $t \neq
v_r$.

Let $i,j: 1 \le i,j \le r$ satisfy Condition 1-3 in FSCD and
$l=j-i+1$.

Since the problem is feasible, during the movement of the robot from
$s^\prime$ to $t$, the robot should be moved to $v_i$ sometime.

If the robot has been moved to $v_i$, then there must be one hole on
the non-$t$-side of $v_i$. So there are at most $l-1$ holes on the
$t$-side of $v_i$. Since $l-1$ holes are needed to occupy all the
vertices $v_{i+1},\cdots,v_j$ and move the robot from $v_i$ to
$v_j$, if the robot has been moved from $v_i$ to $v_j$, then all the
holes are on the non-$t$-side of $v_j$ now. The robot cannot be
moved further towards $t$, namely the robot cannot be moved to the
vertices on the $t$-side of $v_j$, the problem is infeasible, a
contradiction.

``If'' part: Suppose there are at least $l+1$ holes on the $t$ side
of the robot.

Now we show how to move the robot from $s^\prime$ to $t$.

Let $i_1\cdots i_p$ ($i_1 < i_2 < ... < i_p$) be the list of all the
numbers $i_j$ such that $1 \le i_j < r$, and one of the following
two conditions holds,
\begin{itemize}
\item $B_{i_j}$ contains at least three vertices,

\item there is some $B \in V_{sbc}$ not on $P$ satisfying that $\{B, v_{i_j}\} \in
E_{sbc}$.
\end{itemize}

Without loss of generality, assume that there is at least one $i_j$
satisfying the above condition. The case that there are no such
$i_j$'s can be discussed similarly.

By convention, let $i_0=0$.

At first, we show how to move the robot from $B_{i_0}$ to $B_{i_1}$
if $i_1$ satisfies the first condition, and how to move the robot
from $B_{i_0}$ to some $v$ in $B$ such that $B$ is not on $P$,
$\{B,v_{i_1}\} \in E_{sbc}$, and $(v_{i_1},v) \in E$, if $i_1$
satisfies the second condition.

If $B_{i_1}$ contains at least three vertices, since $l \ge i_1$ and
all the holes are on the $t$-side of the robot, we can move the
holes to occupy all the $v_j$'s such that $1 \le j \le i_1$ and let
another hole occupy some vertex in $B_{i_1}$ different from
$v_{i_1}$. Then we move the robot to $v_1$ (which is possible
according to Theorem~\ref{thm:feas-bi-strong}). We continue moving
the robot to $v_2$, $\cdots$, until to $v_{i_1}$. Moreover, because
there is still one hole in $B_{i_1}$, we can move the robot inside
$B_{i_1}$ and move one hole to $v_{i_1+1}$ and all the other holes
to the $t$-side of the $v_{i_1+1}$.

If there is some $B$ not on $P$ such that $\{B, v_{i_1}\} \in
E_{sbc}$, since $l \ge i_1$, we let the holes occupy all the $v_j$'s
such that $1 \le j \le i_1$ and another hole occupy some $v \in
V(B)$ such that $(v_{i_1},v) \in E$. Then we move the robot to
$v_{i_1}$, and to $v$. Now we can move one hole to $v_{i_1}$ and all
the other holes to the $t$-side of $v_{i_1}$.

The discussions for $i_j$ and $i_{j+1}$ ($2 \le j < p$) are similar
to the above discussion.

During the movement, if sometime there are no obstacles on the
$t$-side of the robot, then obviously the robot can be moved to $t$
and the problem is feasible. In the following we consider situations
that such situation does not occur.

Now we assume that
\begin{enumerate}
\item If $B_{i_{p}}$ contains at least three vertices, then the robot is in $B_{i_{p}}$, one hole
is in $v_{i_p+1}$, and all the other holes are on the $t$-side of
$v_{i_p+1}$.

\item If there is $B \in V_{sbc}$ not on $P$ such that
$\{B,v_{i_p}\}
\in E_{sbc}$, then the robot is in some $v \in V(B)$ such that
$\left(v_{i_p},v\right) \in E$, one hole is in $v_{i_p}$, and all
the other holes are on the $t$-side of $v_{i_p}$.
\end{enumerate}

In the first case above, since $l \ge r-i_p$, we can move one hole
to some $w \in V(B_r)$ such that $w \ne v_r$, and the other holes to
occupy vertices $v_{i_p+1},\cdots, v_r$. Then we can move the robot
to $v_{i_p+1}$, $\cdots$, until to $v_r$. Finally move the robot to
$t$ inside $B_r$.

In the second case above, since $l \ge r-{i_p}+1$, we can move one
hole to some $w \in V(B_r)$ such that $w \ne v_r$, and the other
holes to occupy $v_{i_p},\cdots, v_r$. Then we can move the robot to
$v_{i_p}$, $\cdots$, until to $v_r$. Finally move the robot to $t$
inside $B_r$.

\end{proof}

\begin{thm}\label{thm:strong-complexity}
The time complexity of FSCD is $O(nm)$, where $n$ is the number of
vertices and $m$ is the number of arcs.
\end{thm}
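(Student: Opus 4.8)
The plan is to decompose the running time of FSCD into three stages and show each is $O(nm)$ (in fact the dominant term arises from the preprocessing and the ``While'' loop, with the final return phase being cheaper). First I would handle the preprocessing line: constructing $\mathcal{G}(D)$ from the adjacency lists is $O(n+m)$, and constructing the biconnected-component graph $\mathcal{G}_{sbc}(D)=\mathcal{G}_{bc}(\mathcal{G}(D))$ is $O(n+m)$ by the standard Hopcroft--Tarjan biconnectivity algorithm (depth-first search). So this stage costs $O(n+m)$, which since $n \le m$ is $O(m)$.

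The main work is bounding the ``While'' loop. I would invoke the correctness proof, where it is shown that each execution of the loop body reduces the number of holes on the non-$t$-side of the robot by exactly one. Since there are at most $n$ holes, the loop body executes at most $O(n)$ times. The key step is then to argue that one execution of the body costs $O(m)$. Inside the body we must (i) locate the current position $v$ of the robot and determine whether $v \in W_{sbc}$, (ii) in the $W_{sbc}$ case, select an appropriate strongly biconnected component $B$ incident to $v$ with the stated side-properties and possibly route a hole to a cut vertex $w \in V(B)$, or (iii) in the other case, route an obstacle and a hole into $B$ while keeping the robot inside $B$. Each of these operations reduces to testing the ``$t$-side'' relation of Definition~\ref{defn:w-side-v} and performing a search, both of which can be carried out by a traversal of $\mathcal{G}(D)$ and $\mathcal{G}_{sbc}(D)$; since $\mathcal{G}_{sbc}(D)$ is a tree (Corollary~\ref{cor:G_bis}) with $O(n)$ nodes and the underlying graph has $O(m)$ edges, a single execution is $O(n+m)=O(m)$. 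Multiplying by the $O(n)$ iterations gives $O(nm)$ for the loop.

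For the final phase after the loop, I would observe that the path $P=B_0v_1B_1\cdots B_{r-1}v_rB_r$ in the tree $\mathcal{G}_{sbc}(D)$ can be found in $O(n)$ time, and computing the quantity $l$ amounts to scanning the indices $i,j$ along $P$ while checking Conditions 1--3; each condition test (whether $B_k$ has at least three vertices, or whether $v_k$ is incident in $\mathcal{G}_{sbc}(D)$ to some component off $P$) can be precomputed in $O(n+m)$, so $l$ is obtained in $O(n+m)$. Counting the holes on the $t$-side of the robot is a single traversal, $O(m)$. Hence this phase is $O(m)$.

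Summing the three stages, the total is $O(m) + O(nm) + O(m) = O(nm)$, and since $n \le m$ this is the claimed bound. The main obstacle I anticipate is the per-iteration analysis of the ``While'' loop body: one must be careful that the repeated side-tests and the hole/obstacle routing steps do not secretly cost more than $O(m)$ each, in particular that ``moving the robot to $v$ again'' and the hole-routing inside a component do not traverse arcs more than a bounded number of times per iteration. I would address this by noting that each such routing follows a single path in a strongly biconnected component and thus touches each arc $O(1)$ times, keeping the body at $O(m)$.
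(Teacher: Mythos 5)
Your proposal follows essentially the same argument as the paper: the same three-phase decomposition (construction of $\mathcal{G}_{sbc}(D)$, the ``While''-loop bounded by $O(n)$ iterations of $O(m)$ each since every iteration removes one hole from the non-$t$-side, and the final $O(m)$ feasibility check), summing to $O(nm)$. Your version merely fills in more implementation detail (tree traversals, side-tests, hole routing) than the paper's terser proof, but the structure and key counting argument are identical.
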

%
\begin{proof}

There are three phases in FSCD: the phase constructing
$\mathcal{G}_{sbc}(D)$, the phase of the ``While''-loop, and the
phase checking whether the number of holes are sufficient to move
the robot to the destination.

The phase constructing $\mathcal{G}_{sbc}(D)$ is in time $O(m)$
since the biconnected components of a connected graph of $m$ edges
can be constructed in $O(m)$ time by a depth-first-search technique
\cite{CLRS01}.

Each execution of the ``While''-loop takes $O(m)$ time, and there
are at most $n$ such executions since there are at most $n$ holes,
so the ``While'' loop takes $O(nm)$ time in total.

The phase checking whether the number of holes are sufficient to
move the robot to the destination takes $O(m)$ time as well.

So the total time of FSCD is $O(nm)$.
\end{proof}

\section{Conclusion}

In this paper, we considered the feasibility of motion planning on
digraphs, and proposed two algorithms to decide the feasibility of
motion planning on acyclic and strongly connected digraphs
respectively, we proved the correctness of the two algorithms and
analyzed their time complexity. We showed that the feasibility of
motion planning on acyclic and strongly connected digraphs can be
decided in time linear in the product of the number of vertices and
the number of arcs.

The algorithm for the feasibility of motion planning on acyclic
digraphs (FAD) can be adapted to the case where the capacity of each
vertex is more than one (namely, vertices are able to hold several
objects simultaneously), by just changing the computation of the
$h(v)$'s, the number of holes that could be moved to each node $v$.
The algorithm for the feasibility of motion planning on strongly
connected digraphs (FSCD) can also be adapted to the case where the
capacity of each vertex is more than one by only changing the
``While''-loop.

The strongly biconnected digraphs introduced in this paper may be of
independent interest in graph theory since they admit nice
characterization: a nontrivial digraph is strongly biconnected iff
it has an open ear decomposition. It seems interesting to consider
also strongly triconnected digraphs, strongly four-connected
digraphs, etc. and investigate their theoretical properties.

The feasibility of motion planning on digraphs is only partially
solved in this paper since we did not give the algorithm for
deciding the feasibility on general digraphs, which, as well as the
optimization of the motion of robot and obstacles, is much more
intricate than that on graphs because of the irreversibility of the
movements on digraphs.

The motion planning on graphs with one robot, GMP1R, has a natural
generalization, GMP$k$R, where there are $k$ robots with their
respective destinations. It is also interesting to consider motion
planning on digraphs with $k$ robots since in practice it is more
reasonable that a robot shares its workspace with other robots.

GMP$k$R in general is a very complex problem. A special case of
GMP$k$R, where there are no additional obstacles (thus all the
movable objects have their destinations), has been considered.
Wilson studied the special case of GMP$k$R for $k=n-1$ in
\cite{Wilson74}, which is a generalization of the ``15-puzzle''
problem to general graphs. They gave an efficiently checkable
characterization of the solvable instances of the problem.
Kornhauser et al. extended this result to $k \le n-1$
\cite{KornhauserMS84}. Goldreich proved that determining the
shortest move sequence for the problem studied by Kornhauser et al.
is NP-hard \cite{Goldreich84}. It seems more realistic to first
consider the above special case of GMP$k$R on digraphs.
%
%
\bibliographystyle{amsalpha}
\bibliography{biblio-motion}

\providecommand{\bysame}{\leavevmode\hbox to3em{\hrulefill}\thinspace}
\providecommand{\MR}{\relax\ifhmode\unskip\space\fi MR }
\providecommand{\MRhref}[2]{%
  \href{http://www.ams.org/mathscinet-getitem?mr=#1}{#2}
}
\providecommand{\href}[2]{#2}
\begin{thebibliography}{AMPP96}

\bibitem[AMPP96]{AulettaMPP96}
V.~Auletta, A.~Monti, D.~Parente, and G.~Persiano, \emph{A linear time
  algorithm for the feasibility of pebble motion on trees}, SWAT '96:
  Proceedings of the 5th Scandinavian Workshop on Algorithm Theory, LNCS 1097,
  Springer-Verlag, 1996, pp.~259--270.

\bibitem[AP01]{AulettaP01}
V.~Auletta and P.~Persiano, \emph{Optimal pebble motion on a tree}, Informaiton
  and Computation \textbf{165} (2001), no.~1, 42--68.

\bibitem[BJG00]{BangGutin00}
J.~Bang-Jensen and G.~Gutin, \emph{Digraphs: Theory, algorithms and
  applications, springer monographs in mathematics}, Springer-Verlag, 2000.

\bibitem[CLRS01]{CLRS01}
Thomas~H. Cormen, Charles~E. Leiserson, Ronald~L. Rivest, and Clifford Stein,
  \emph{Introduction to algorithms, second edition}, The MIT Press, 2001.

\bibitem[FK99]{FK99}
P.~W. Finn and L.~E. Kavmkit, \emph{Computational approaches to drug design},
  Algorithmica \textbf{25} (1999), 347--371.

\bibitem[Gol84]{Goldreich84}
O.~Goldreich, \emph{Finding the shortest move-sequence in the graph-generalized
  15-puzzle is {NP}-hard}, manuscript, 1984.

\bibitem[JJ06]{JetchevaJ06}
Jorjeta~G. Jetcheva and David~B. Johnson, \emph{Routing characteristics of ad
  hoc networks with unidirectional links}, Ad Hoc Networks \textbf{4} (2006),
  no.~3, 303--325.

\bibitem[KMS84]{KornhauserMS84}
D.~Kornhauser, G.~Miller, and P.~Spirakis, \emph{Coordinating pebble motion on
  graphs, the diameter of permutation groups, and applications}, FOCS'84, 1984,
  pp.~241--250.

\bibitem[Lat95]{Latombe95}
Jean-Claude Latombe, \emph{Controllability, recognizability, and complexity
  issues in robot motion planning}, FOCS, 1995, pp.~484--500.

\bibitem[LaV06]{Lavalle06}
Steven~M. LaValle, \emph{Planning algorithms}, Cambridge University Press,
  2006.

\bibitem[MD02]{MD02}
Mahesh~K. Marina and Samir~R. Das, \emph{Routing performance in the presence of
  unidirectional links in multihop wireless networks}, in Proc. of ACM MobiHoc,
  2002, pp.~12--23.

\bibitem[MPG]{MPG}
\emph{Motion planning game}, website,
  \url{http://www.download-game.com/Motion_Planning_Game.htm}.

\bibitem[Per88]{Per88}
Yvonne Perrott, \emph{Track transportation systems}, European patent, 1988,
  \url{http://www.freepatentsonline.com/EP0284316.html}.

\bibitem[PRST94]{PapadimitriouRST94}
C.~H. Papadimitriou, P.~Raghavan, M.~Sudan, and H.~Tamaki, \emph{Motion
  planning on a graph}, FOCS'94, 1994, pp.~511--520.

\bibitem[SA01]{SA01}
Guang Song and Nancy~M. Amato, \emph{Using motion planning to study protein
  folding pathways}, RECOMB '01: Proceedings of the fifth annual international
  conference on Computational biology (New York, NY, USA), ACM, 2001,
  pp.~287--296.

\bibitem[Wes00]{West00}
Douglas~B. West, \emph{Introduction to graph theory, second edition}, Prentice
  Hall, 2000.

\bibitem[Wil74]{Wilson74}
R.~M. Wilson, \emph{Graph puzzles, homotopy, and the alternating group},
  Journal of Combinatorial Theory (B) \textbf{16} (1974), 86--96.

\end{thebibliography}

\end{document}